\newtheorem{theorem}{Theorem}
\newtheorem{corollary}[theorem]{Corollary}
\def\>{\textgreater}
\begin{document}
\title{Noise-induced tipping under periodic forcing: preferred tipping phase in a non-adiabatic forcing regime}

\author[ ]{Yuxin Chen\footnote{Department of Engineering Sciences and Applied Mathematics, Northwestern University
    ({yuxinchen2018@u.northwestern.edu})} , \hspace{0.1cm}   John A. Gemmer\footnote{Department of Mathematics and Statistics, Wake Forest University ({gemmerj@wfu.edu})} , \hspace{0.1cm}  Mary Silber\footnote{Committee on Computational and Applied Mathematics and Department of Statistics, University of Chicago ({msilber@uchicago.edu})} , \hspace{0.05cm} and \hspace{0.05cm} Alexandria Volkening\footnote{Mathematical Biosciences Institute, Ohio State University ({volkening.2@mbi.osu.edu}) \\ \indent \textbf{Funding}: MS, YC, and AV are partially funded by the AMS Mathematics Research Communities (NSF Grant No. 1321794). The work of MS and YC is funded in part by NSF DMS-1517416. The work of AV and JG was supported in part by NSF-RTG grant DMS-1148284, and AV is currently supported  by the Mathematical Biosciences Institute and the NSF under DMS-1440386.}}
\maketitle

\begin{abstract}
We consider a periodically-forced 1-D Langevin equation that possesses two stable periodic solutions in the absence of noise. We ask the question: is there a most likely noise-induced transition path between these periodic solutions that allows us to identify a preferred phase of the forcing when tipping occurs? The quasistatic regime, where the forcing period is long compared to the adiabatic relaxation time, has been well studied; our work instead explores the case when these timescales are comparable. We compute optimal paths using the path integral method incorporating the Onsager-Machlup functional and validate results with Monte Carlo simulations. Results for the preferred tipping phase are compared with the deterministic aspects of the problem. We identify parameter regimes where nullclines, associated with the deterministic problem in a 2-D extended phase space, form passageways through which the optimal paths transit. As the nullclines are independent of the relaxation time and the noise strength, this leads to a robust deterministic predictor of preferred tipping phase in a regime where forcing is neither too fast, nor too slow.
\end{abstract}

\begin{paragraph}{Key words}
  noise-induced tipping, metastable state, optimal path, path integral method, stochastic dynamics
\end{paragraph}

\begin{paragraph}{AMS subject classifications}
68Q25, 60G17, 37J50
\end{paragraph}

\graphicspath{{./figs/}}
Mathematical mechanisms for `tipping points' in low-dimensional dynamical systems have been classified according to whether they involve, pre- dominantly, a bifurcation, noise, or parameter drift\cite{ashwin2012tipping,kuehn2011mathematical}. This paper focuses on noise-induced tip- ping, between distinct periodic attractors, in systems with periodic forcing. We are interested in determining whether there is a dominant phase of the forcing when the system is most likely to tip from one attractor to another, and, if so, which deterministic features might predict that phase. A better understanding of these features, and the possible role of intrinsic relaxation vs. exogenous forcing timescales, may help identify a phase of the forcing when the system is most vulnerable to a noise-induced abrupt transition.

In this paper, we focus on a simple stochastic differential equation for which we can readily control the characteristic timescales of the problem. Specifically, we consider
\begin{align} 
dX_t&=\frac{1}{\varepsilon} f(X_t,t)dt+\sigma dW_t\nonumber\\ &\equiv\frac{1}{\varepsilon}\left(X_t-X_t^3+\alpha +A\cos \left(2\pi t\right)\right)dt+\sigma dW_t.\label{eqn:ModelSDE}
\end{align}
Here $X_t\in \mathbb{R}$ is a stochastic process parameterized by time $t\geq 0$, $W_t$ is a standard Wiener process and $\sigma>0$ denotes the noise strength. We choose the parameters $\varepsilon, \alpha, A>0$ so that the deterministic problem has two stable periodic solutions separated by an unstable one (represented by solid and dashed curves in Figure \ref{Fig:Example} and \ref{fig: histogramsVarySigma}). The parameter $\varepsilon$ represents a ratio of the characteristic relaxation time of the flow to the period of the forcing, and arises when we nondimensionalize time by the forcing period. Throughout the paper, the ``lower'' and ``upper'' stable periodic solutions of $\dot{x}=f(x,t)/\varepsilon$ are denoted by $x_l^*$ and $x_u^*$, respectively, and the ``middle'' unstable one is denoted by $x_m^*$, where $x_l^*(t)<x_m^*(t)<x_u^*(t)$ for $t\in \mathbb{R}$. The presence of noise introduces a mechanism for transition from one stable periodic solution to the other. Figure \ref{Fig:Example} shows some example realizations of \eqref{eqn:ModelSDE}.

\begin{figure}[t]
\centering
\includegraphics[width=0.5\textwidth]{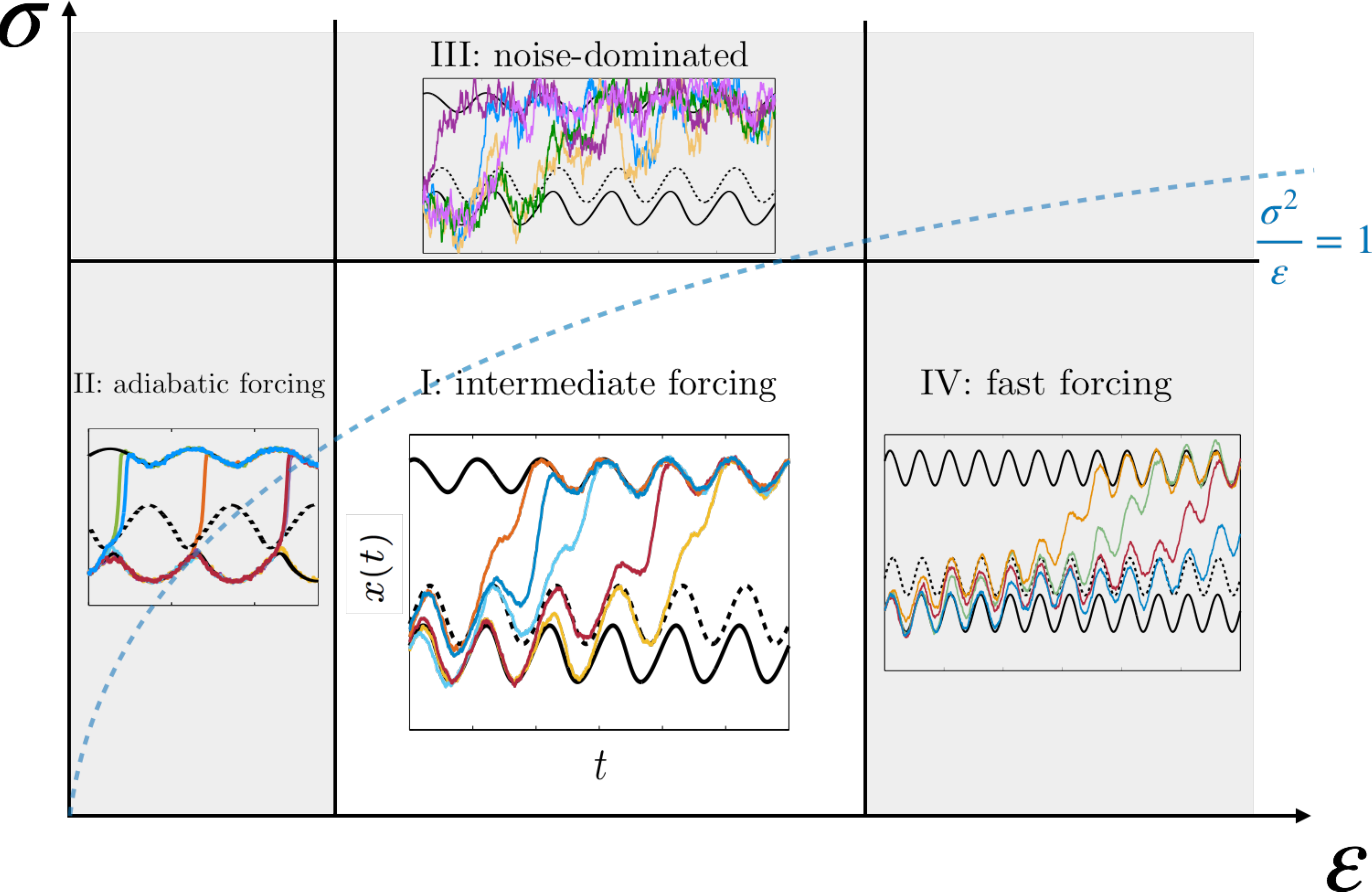}
\caption{Schematic parameter diagram in $(\varepsilon, \sigma)$-parameter plane. Insets are tipping realizations of Eq.\eqref{eqn:ModelSDE}. Black solid(dashed) curves represent stable(unstable) solutions to the deterministic Eq. \eqref{eqn:ModelSDE} with $\sigma = 0$.
Parameters in each inset are taken as follows: $(\varepsilon, \sigma) = (0.25, 0.1)$ in regime I; $(\varepsilon, \sigma) = (0.025, 0.1)$ in regime II; $(\varepsilon, \sigma) = (0.25, 0.8)$ in regime III; $(\varepsilon, \sigma) = (1, 0.2)$ in regime IV. $(\alpha, A) = (0.25, 0.5)$ in all regimes. }
\label{Fig:Example}
\end{figure}

In Figure \ref{Fig:Example} we present a schematic diagram of different $\varepsilon$ and $\sigma$ regimes in which noise-induced transitions differ qualitatively. Comparing insets in Figure \ref{Fig:Example}, the value of $\varepsilon$ in (\ref{eqn:ModelSDE}) can strongly impact the geometry and separation of the periodic solutions. This timescale ratio, in turn, may affect when and how the sample paths cross the unstable periodic solution. The adiabatic regime (regime II) in Figure \ref{Fig:Example}, where $\varepsilon\ll 1$, has been well-studied \cite{berglund2002metastability, berglund2003geometric, kuehn2011mathematical, kuehn2013mathematical,gammaitoni1998stochastic, freidlin2000quasi, berglund2002sample, herrmann2005exit}. In this case, the relaxation time to stable periodic orbits is fast compared to the forcing period. Consequently, noise induced transitions are ``instantaneous'' and occur with high probability when the separation between the stable and unstable periodic orbits is minimal and when an associated potential barrier height turns out to be smallest. 
In contrast, in regime I of Figure \ref{Fig:Example}, obtained with $\varepsilon=\mathcal{O}(1)$ and otherwise the same parameters, the transition occurs more slowly, especially after crossing the unstable periodic solution, and the separation between the deterministic solutions does not vary significantly over the forcing period. These observations suggest challenges with identifying a preferred tipping phase outside of the adiabatic regime. Moreover, in this regime noise is not too large, therefore noise and deterministic drift may act on comparable time-scales. 
In the noise-dominated regime (regime III in Figure \ref{Fig:Example}), randomness overpowers the deterministic dynamics and tipping samples are extremely noisy. In the fast forcing regime, marked as regime IV in Figure \ref{Fig:Example}, where the timescale ratio $\varepsilon$ is large, the relaxation time of the dynamics is slow compared to the forcing period and transitions can take a long time compared to the forcing period, possibly making the notion of preferred tipping phase mute.


In this paper, we study noise-induced transitions between stable periodic solutions in the intermediate forcing regime (regime I shown in Figure \ref{Fig:Example}). We use the framework of `exit problems' \cite{freidlin2012random}, where the most probable path of escape, from one domain of attraction to another,  is typically determined by taking the limit $\sigma\rightarrow 0$. In this setting, \emph{most probable paths} (also called \emph{optimal paths}) are often approximated by calculating the critical points of either the Freidlin-Wentzell (FW) \cite{freidlin2012random,day1990large, ren2004minimum} or the Onsager-Machlup (OM) rate functional \cite{onsager1953fluctuations,chaichian2001path}. Computing minimizers of these functionals can be an efficient means to estimate the optimal path without performing Monte Carlo simulations; for example, the FW rate functional was utilized to obtain optimal paths \cite{weinan2002string, heymann2008geometric, cameron2012finding} and the OM functional was applied for the same purpose \cite{ritchie2016early, navarra2013path}. While the FW and OM functionals are formally equivalent when $\sigma \to 0$,  their minimizers may differ for small but finite $\sigma$. Moreover, consideration of other small parameters in the problem, such as the time scale ratio $\varepsilon$, may be important in resolving discrepancies. 

As shown in Figure \ref{fig: histogramsVarySigma}, we let $t_l$, $t_m$, and $t_u$ denote the stopping times when a sample path departs from $x_l^*$, leaves $x_m^*$, and arrives at $x_u^*$, respectively. Formally, they are defined as $t_i = \min{\{ t\geq 0: X_t \geq x_i^*(t)\}}$ where $i = l, m$, and $u$. 
In Figure \ref{fig: histogramsVarySigma} we present the distributions of $t_l$, $t_m$, and $t_u$ modded by the forcing period $T=1$. We take deterministic parameters in regime I with $(\varepsilon, A, \alpha) = (0.25, 0.65, 0.15)$ and take the noise intensity with $\sigma = 0.15$ on the left panel and $0.3$ on the right panel. We notice that for both values of $\sigma$, the histograms of $t_l$ have a more prominent peak than the histograms of $t_m$ and $t_u$.  We find that $t_l$ is a robust `tipping phase' in this intermediate regime. Additionally, we find that the distribution of $t_l$ becomes less peaked as we increase the noise intensity $\sigma$, which indicates that there is no distinctive peak when $\sigma$ is too large and a most likely tipping phase may not be identifiable. Thus we do not consider the noise-dominated regime III in Figure \ref{Fig:Example}, where realizations are extremely noisy and deterministic features are washed out.  Histograms with different values of $\sigma \in [0.15, 0.4]$ (not shown here) for parameters in regime I give consistent results.

\begin{figure}[ht]
\centering
\includegraphics[width=0.4\textwidth]{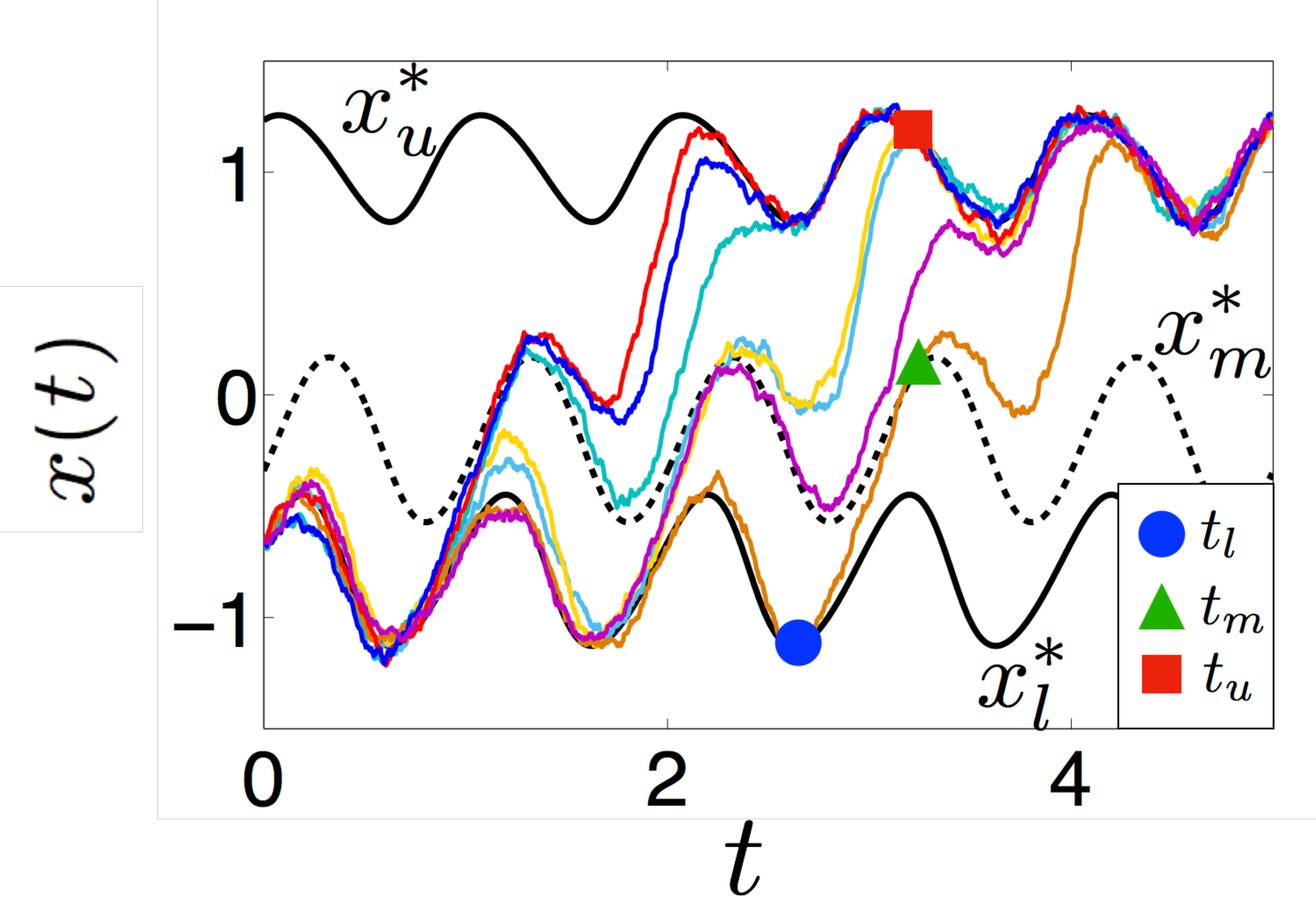}
\includegraphics[width=0.5\textwidth]{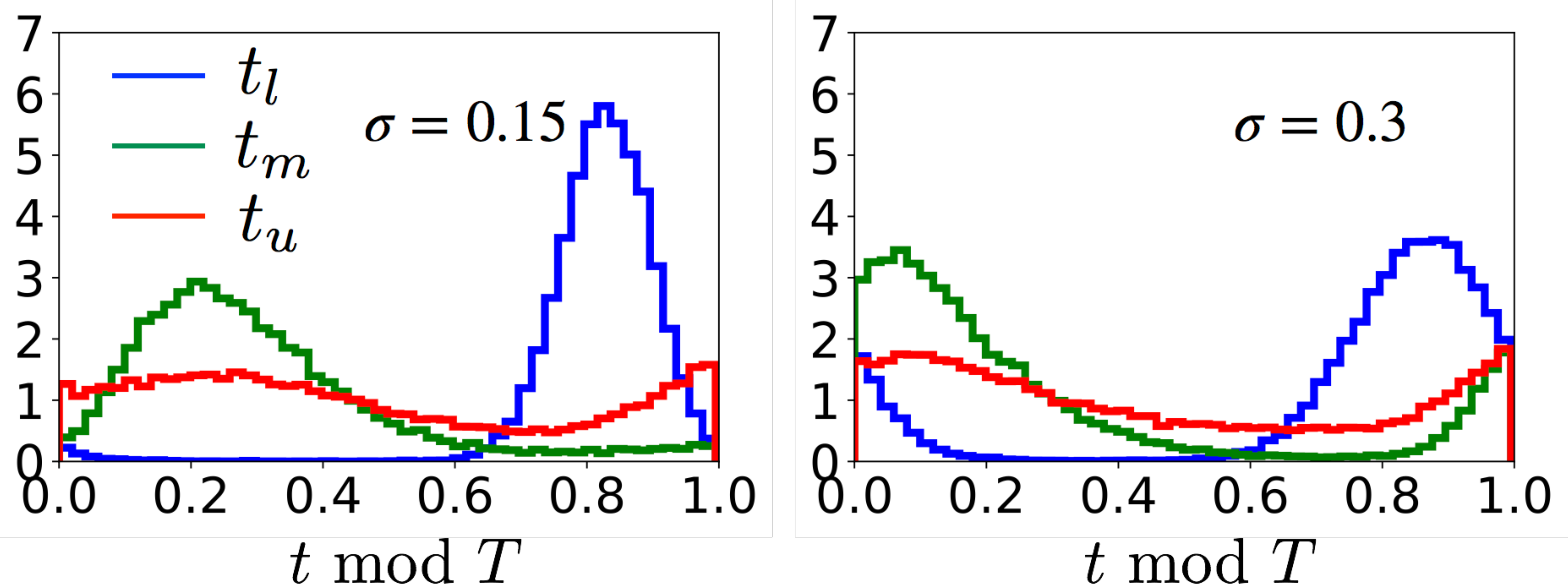}
\caption{Normalized histograms of $t_l, t_m$ and $t_u$, which denote the times when sample path of \eqref{eqn:ModelSDE} cross $x_l^*$, $x_m^*$, and $x_u^*$, respectively, with $(\varepsilon, A, \alpha) = (0.25, 0.65, 0.15)$. The noise intensity $\sigma$ is taken as $0.15$ on the left and $0.3$ on the right.}
\label{fig: histogramsVarySigma}
\end{figure}

In addition to looking for conditions on $\varepsilon$ and $\sigma$ under which we can identify a robust `tipping phase', we aim to investigate what features of the deterministic dynamics select such a phase when it exists. The rest of the paper is organized as follows. In Section \ref{sec:mathForm} we formulate the problem using  a path integral approach. We describe the FW and  OM functionals that we consider, and how the parameters $\varepsilon$ and $\sigma$ enter the variational problem for computing optimal paths. Section \ref{sec:noiseDriftBalance} contains the main contributions of this paper. Focusing on regime I where $\varepsilon = O(1)$, we perform numerical simulations over a range of parameters in order to identify a robust definition of preferred tipping phase. We find that it is associated with the time of initial departure from the stable periodic solution, as suggested by Figure \ref{fig: histogramsVarySigma}. It occurs at a phase of the forcing when the flow near the stable periodic solution changes, momentarily, from contracting to expanding. This phase is determined by an intersection of the periodic solution with the nullclines $f(x,t)=0$ of the deterministic system.  We conclude in Section \ref{sec:Discussion} with a discussion of our results, limitations of the method, and some avenues for future research. We further numerically investigate the most probable transition phase in a conceptual Arctic sea ice model due to Eisenman and Wettlaufer~\cite{eisenman2009nonlinear}, which has a strong seasonal forcing and possesses bistable situations. We use insights from our case study to conjecture that this model is in regime IV of Figure \ref{Fig:Example}, the fast forcing case. 

To simplify the notation, throughout the paper we use $\dot{x}$ to represent time derivative $dx/dt$ and subscript to denote partial derivative, e.g. $V_x(x,t) := \partial V/\partial x$. 

%
%
\section{Mathematical formulation}\label{sec:mathForm}
\subsection{Deterministic problem} \label{sec:probForm}
We first discuss the role of the parameters in the deterministic problem
\begin{align} 
\dot{x} = \frac{1}{\varepsilon}f(x,t) &=-\frac{1}{\varepsilon}V_x(x,t)\nonumber\\
&= \frac{1}{\varepsilon}\left(x-x^3 + \alpha + A \cos (2\pi t)\right),\label{eqn:ODE}
\end{align}
where $V(x,t)$ is a potential associated with the deterministic dynamics $f(x,t)/\varepsilon$, which is the drift term in \eqref{eqn:ModelSDE}:
\begin{equation}\label{eq:potential}
V(x,t)=\frac{x^4}{4}-\frac{x^2}{2}-\left(\alpha+A\cos\left(2\pi t\right)\right)x. 
\end{equation}
The parameter $A$ determines the forcing strength and $\alpha$ breaks a reflection symmetry of the potential. The timescale ratio $\varepsilon$ controls how (un)stable the periodic solutions $x^*$ of (\ref{eqn:ODE}) are; their Floquet multipliers, $\rho(x^*)$, depend exponentially on $1/\varepsilon$:
\begin{equation}\label{Eqn:FloquetMultiplier}
\rho(x^*)=\exp\left(\frac{1}{\varepsilon}\int_0^1 f_{x}(x^*(t),t)dt\right).
\end{equation}

\begin{figure}[t]
\centering
\begin{subfigure}[b]{0.4\textwidth}
\includegraphics[width=\textwidth]{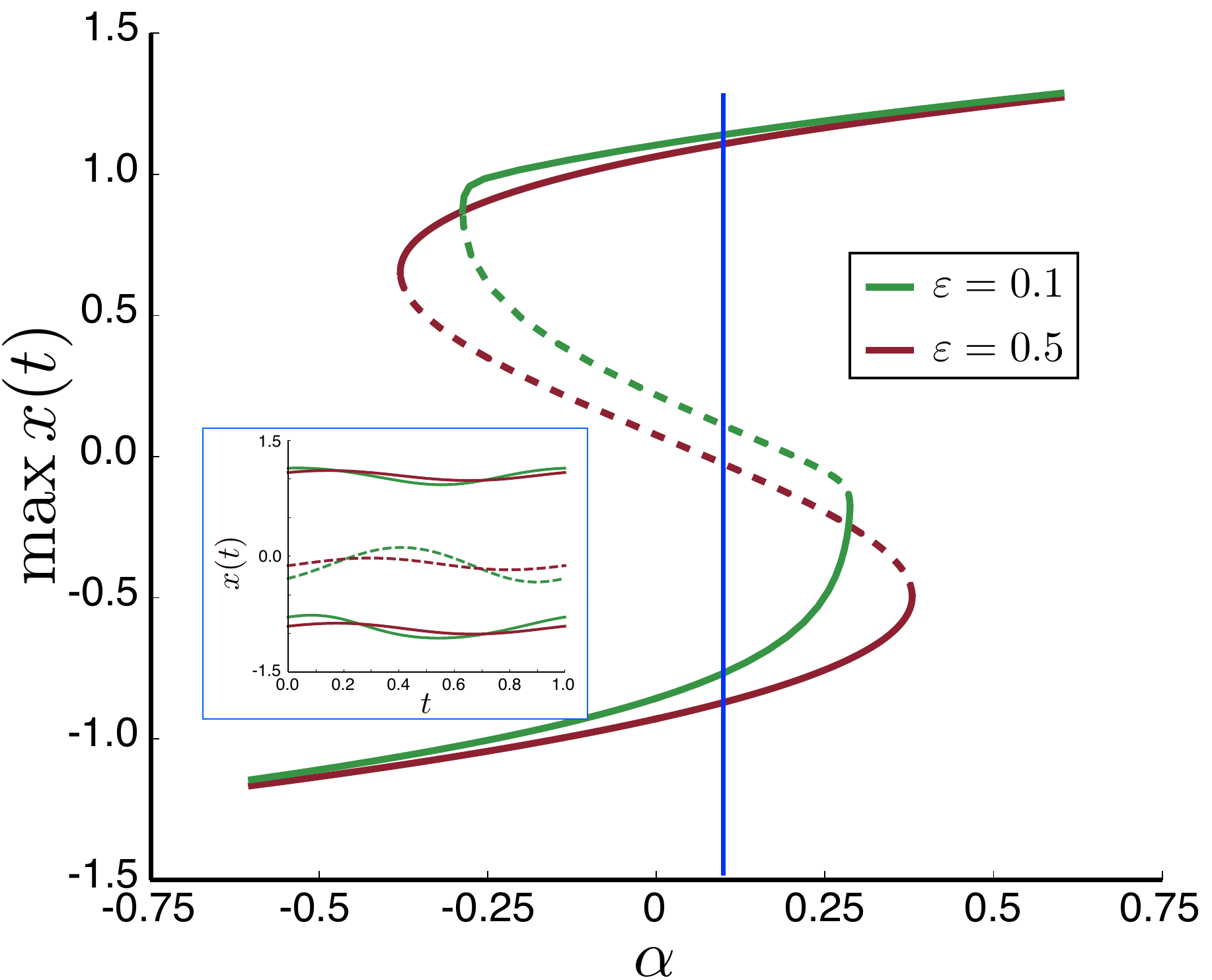}
\caption{}
\end{subfigure}
\begin{subfigure}[b]{0.38\textheight}
\includegraphics[width=\textwidth]{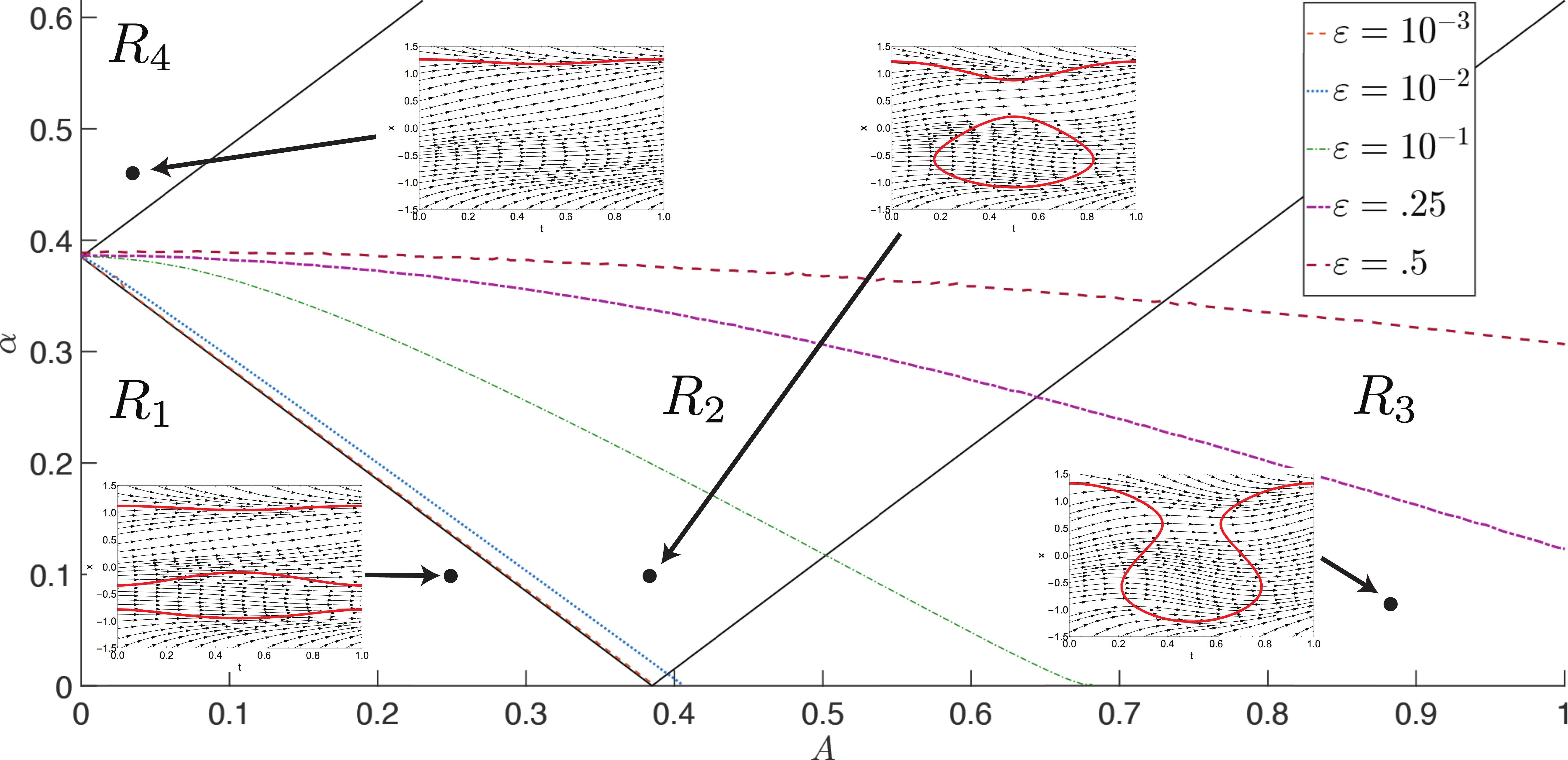}
\caption{}
\end{subfigure}

\caption{(a): Hysteretic bifurcation diagrams (in $\alpha$) associated with (\ref{eqn:ODE}) for two $\varepsilon$ values and $A=0.25$. The inset figure shows the solutions for $\alpha = 0.1$.  (b): Regimes $R_1-R_4$, in the ($A,\alpha$)-parameter plane, with qualitatively different flow geometry; inset figures illustrate the flow geometry for each, with the nullclines (curves where $f(x,t)=0$) overlaid in red. The dashed curves track the location of the right-most saddle node bifurcation in $\alpha$,  for various $\varepsilon$ values (the region with three periodic solutions is below these curves). }
\label{Fig:DynamicsDiagram}
\end{figure}

Figure \ref{Fig:DynamicsDiagram} summarizes the dependence of the deterministic dynamics of (\ref{eqn:ODE}) on the parameters. The phase diagram in Figure \ref{Fig:DynamicsDiagram}(b) shows that (\ref{eqn:ODE}) admits four qualitatively distinct parameter regions for $\alpha>0$; these are characterized by different configurations of the nullclines $f(x,t)=0$, in the $(t,x)$-phase plane. In this paper we focus on regions $R_1$ -- $R_3$ of the $(A, \alpha)$-parameter plane where, for a given $\varepsilon$, there are two stable periodic solutions. The location of the saddle-node bifurcations of Figure \ref{Fig:DynamicsDiagram}(a) depend on $\varepsilon$ as indicated. (There is only one stable periodic solution in $R_4$.) We give the description of the three parameter regions and their nullclines configurations as follows:
\begin{itemize}
    \item The nullclines in $R_1$, defined by $0<\alpha+A< 2/(3\sqrt{3})$, consist of three curves that partition extended phase space into four regions with vertical flow directions that alternate in sign. A standard Poincar\'e analysis proves existence of two stable periodic solutions for all $\varepsilon$ values in $R_1$.
    \item Region $R_2$, defined by $2/(3\sqrt{3})<\alpha+A$ and $|\alpha-A|<2/(3\sqrt{3})$, has two nullcline curves that divide the phase space into three regions.
    \item In $R_3$, defined by $0<\alpha<A-2/\sqrt{3}$, phase space is partitioned by the nullcline into two regions. 
\end{itemize}

Figure \ref{Fig:DynamicsDiagram}(b) highlights another $\varepsilon$-dependent feature of the dynamics: when $\varepsilon 
\ll 1$ in (\ref{eqn:ODE}) the periodic solutions  are well approximated by the nullclines $f(x,t)=0$ in $R_1$. As $\varepsilon$ increases, deviations between the periodic solutions and  the nullclines increase, leading to a pronounced intertwining of the nullclines with $x_l^*$ and $x_m^*$. This indicates that there can be intervals of time when the flow is expanding about a stable orbit  and/or contracting about the unstable one outside of the adiabatic regime.
These intervals become more pronounced
in $R_2$ and $R_3$, where, provided $\varepsilon$ is large enough, three periodic solutions may exist even though there are fewer than three curves comprising the nullcline; in these regions the double well structure of $V(x,t)$ is temporarily lost during certain phases of the forcing. 

\subsection{Path integral approach}\label{sec:pathIntegralForm}
We use the path integral formulation to identify a possible preferred phase for tipping. For $t_f>t_0\geq 0$, let $\mathcal{P}=\{x\in C([t_0,t_f];\mathbb{R}):x(t_0)=x_0 \text{ and } x(t_f)=x_f\}$ denote the set of paths
connecting a point $(t_0,x_0)$ on $x_l^*$ and a point $(t_f,x_f)$ on $x_u^*$. If $x\in \mathcal{P}$ is differentiable, then the probability of the tipping trajectories lying in an infinitesimally small tubular neighborhood of $x, P_\sigma[x]$, satisfies
\begin{align}\label{eq: transProb}
P_{\sigma}[x] \propto
\int_{\mathcal{P}}
&\exp\bigg[-\frac{1}{2\sigma^2}\bigg(\int_{t_0}^{t_f}\bigg(\dot{x}-\frac{1}{\varepsilon}f(x,\tau)\bigg)^2d\tau \nonumber\\
&+ \frac{\sigma^2}{\varepsilon}\int_{t_0}^{t_f} f_x(x,\tau) d\tau \bigg)\bigg] d_W\left[x\right],
\end{align}
where  $d_W\left[x\right]$ is the Wiener measure \cite{wiegel1986introduction, chaichian2001path}, and the proportionality reflects the missing normalization prefactor. Heuristically, this formulation can be derived by considering an equally spaced partition $t_0=t_1<t_2<\ldots <t_n=t_f$ of width $\Delta t$, calculating the probability that a realization passes through a sequence of intervals $[a_i,b_i]$ at time $t_i$, and taking the limit $|b_i-a_i|\rightarrow 0$ and $\Delta t\rightarrow 0$; derivations following this approach can be found in the physics literature \cite{wiegel1986introduction, chaichian2001path}. 

The probability of a tipping event lying in a small tubular neighborhood of a tipping path $x\in\mathcal{P}$ is then determined by integrating $P_\sigma[x]$ over this neighborhood. Similar to Laplace's method for the asymptotic expansion of exponential integrals, we expect as $\sigma\rightarrow 0$ that the dominant contribution to this integral is concentrated about local minimizers of the argument of the exponential in the definition of (\ref{eq: transProb}). With this as motivation, we define the \emph{most probable path} or \emph{optimal path} connecting points $(t_0,x_0)$ and $(t_f,x_f)$ to be minimizers of the Onsager-Machlup (OM) functional $I_{\sigma}: \mathcal{A}\mapsto \mathbb{R}$ given by
\begin{align}\label{eq:rateFunctional}
I_{\sigma}[x]:=&\int_{t_0}^{t_f}L(t, x,\dot{x})dt\nonumber \\
=&\underbrace{\int_{t_0}^{t_f} \left(\dot{x}-\frac{1}{\varepsilon}f(x,t)\right)^2
dt}_{I_{\text{FW}}} +\frac{\sigma^2}{\varepsilon}\underbrace{ \int_{t_0}^{t_f} f_x(x,t)dt}_{I_\text{OM2}},
\end{align}
which can be expressed as the sum of the Freidlin--Wentzell (FW) rate functional $I_\text{FW}$ derived by Freidlin and Wentzell \cite{freidlin2012random} and an additional integral $I_\text{OM2}$. Here the admissible set is $\mathcal{A}:=\{x\in H^1([t_0,t_f];\mathbb{R}): x(t_0)=x_0, x(t_f)=x_f\}$ and $L:([t_0,t_f],\mathbb{R},\mathbb{R})\mapsto \mathbb{R}$ denotes the Lagrangian.

Since $L(t, x,\dot{x})$ is convex in $\dot{x}$, to prove the existence of a minimum in $\mathcal{A}$ it is sufficient to show that $I_{\sigma}$ is coercive with respect to the $H^1$ norm. The next theorem establishes this fact (as $\|\dot{x}\|_{L^2}\rightarrow \infty, I_\sigma\rightarrow \infty$); the existence of a minimizer is then a corollary following standard techniques from the direct method of the calculus of variations \cite{jost1998calculus}.
\begin{theorem}\label{Thm:Coercivity}
There exists $M\in \mathbb{R}$ such that for all $x\in \mathcal{A}$ and all $\sigma\in [0,1]$
\begin{equation*}
I_{\sigma}[x]> \|\dot{x}\|_{L^2}+M.
\end{equation*}
\end{theorem}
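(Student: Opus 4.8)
The plan is to expand the Freidlin--Wentzell part of the functional and isolate the one term in which $\dot x$ appears linearly, which is the sole obstruction to reading off coercivity directly. Writing $I_{\text{FW}}=\int_{t_0}^{t_f}(\dot x-\varepsilon^{-1}f)^2\,dt$ and squaring out the integrand gives
\[
I_{\sigma}[x]=\|\dot x\|_{L^2}^2-\frac{2}{\varepsilon}\int_{t_0}^{t_f}\dot x\,f(x,t)\,dt+\frac{1}{\varepsilon^2}\int_{t_0}^{t_f}f(x,t)^2\,dt+\frac{\sigma^2}{\varepsilon}\int_{t_0}^{t_f}f_x(x,t)\,dt .
\]
Every term but the cross term is either a clean square or depends on $x$ alone; the cross term is linear in $\dot x$ and, bounded naively, would produce an uncontrolled product of $\|\dot x\|_{L^2}$ with a norm of $x$. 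Removing this term is the heart of the argument.

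The key step exploits $f=-V_x$. Since $x\in H^1([t_0,t_f])$ it is absolutely continuous and, by the one-dimensional Sobolev embedding $H^1\hookrightarrow C^0$, bounded; hence $t\mapsto V(x(t),t)$ is absolutely continuous and the chain rule $\tfrac{d}{dt}V(x(t),t)=V_x\dot x+V_t$ holds almost everywhere. The fundamental theorem of calculus then yields
\[
\int_{t_0}^{t_f}\dot x\,f\,dt=-\int_{t_0}^{t_f}\dot x\,V_x\,dt=\bigl(V(x_0,t_0)-V(x_f,t_f)\bigr)+\int_{t_0}^{t_f}V_t(x(t),t)\,dt .
\]
Because the endpoints are fixed, the boundary term is a constant independent of the path, and $V_t=2\pi A\,x\sin(2\pi t)$ is linear in $x$ with no dependence on $\dot x$. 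Thus the cross term is traded for a fixed constant plus an expression involving $x$ only, and $\|\dot x\|_{L^2}^2$ is now the entire $\dot x$-dependence of $I_\sigma$.

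What remains is to bound the purely $x$-dependent integrand from below, uniformly in $x\in\mathbb R$, in $t$, and in $\sigma\in[0,1]$. Collecting the surviving terms, I would show
\[
\frac{1}{\varepsilon^2}f(x,t)^2-\frac{2}{\varepsilon}V_t(x,t)+\frac{\sigma^2}{\varepsilon}f_x(x,t)\ \ge\ -K
\]
for some constant $K$. As a polynomial in $x$ this expression has degree six with positive leading coefficient $1/\varepsilon^2$ (from the $x^3$ term of $f$), while $V_t$ contributes only a degree-one term and $f_x=1-3x^2$ only degree two; using $\sigma^2\le 1$ bounds the negative $-3\sigma^2 x^2/\varepsilon$ contribution by $-3x^2/\varepsilon$ uniformly in $\sigma$, and all $t$-dependence enters through the bounded functions $\cos(2\pi t),\sin(2\pi t)$. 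A sextic with fixed positive leading coefficient and bounded lower-order coefficients tends to $+\infty$ uniformly as $|x|\to\infty$, so it is bounded below, producing $K$. Integrating gives $I_\sigma[x]\ge\|\dot x\|_{L^2}^2+M'$ with $M'$ independent of $x$ and $\sigma$.

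To finish I would pass from the quadratic to the linear lower bound with the elementary inequality $s^2\ge s-\tfrac14$ (valid for all real $s$), applied to $s=\|\dot x\|_{L^2}$, so that $I_\sigma[x]\ge\|\dot x\|_{L^2}-\tfrac14+M'$; choosing $M$ slightly below $M'-\tfrac14$ makes the inequality strict, as required. The main obstacle is concentrated entirely in the cross term: without the observation that $f=-V_x$ converts $\int\dot x\,f\,dt$ into a fixed boundary contribution plus a $\dot x$-free integral, the linear-in-$\dot x$ term cannot be absorbed, and it is the sextic growth of $f^2$ in $x$ that then dominates the lower-order penalties left behind.
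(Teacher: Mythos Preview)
Your proof is correct and follows essentially the same route as the paper: expand the square, convert the cross term $\int \dot{x}f\,dt$ into a fixed boundary contribution $\Delta V$ plus the $\dot{x}$-free integral $\int V_t\,dt$ (the paper does this by integrating $\dot{x}\cos(2\pi t)$ by parts, which is the same computation), then observe that the remaining integrand is a sextic in $x$ with positive leading coefficient and $\sigma$-uniformly bounded lower-order terms, hence bounded below. Your final step $s^2\ge s-\tfrac14$ to pass from the quadratic bound $\|\dot{x}\|_{L^2}^2+M'$ to the stated linear bound $\|\dot{x}\|_{L^2}+M$ is a genuine improvement over the paper's presentation, which simply writes $\|\dot{x}\|_{L^2}^2+\text{const}=\|\dot{x}\|_{L^2}+M$ without justification.
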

\begin{proof}
Let $x\in \mathcal{A}$, $\sigma\in [0,1]$, and $\Delta V=V(x_f,t_f)-V(x_0,t_0)$, where the potential $V$ is given by \eqref{eq:potential}. Expanding the integrand of $I_\text{FW}$ and integrating $\dot{x}\cos(2\pi t)$ by parts, we find 
\begin{align*}
&I_{\sigma}[x]=\|\dot{x}\|_{L^2}+2\varepsilon^{-1}\Delta V\\
+&\frac{1}{\varepsilon}\int_{t_0}^{t_f}\left(4\pi Ax(t)\sin(2\pi t)+\frac{f(x(t),t)^2}{\varepsilon}+\sigma^2 f_x(x(t),t)\right)dt.
\end{align*}
Observing that $\sigma^2f_x(x(t),t)=\sigma^2 (1-3x(t)^2)\geq -3x(t)^2$ for $\sigma\in[0,1]$, it follows that $I_{\sigma}[x]\geq \|\dot{x}\|_{L^2}+2\varepsilon^{-1}\Delta V+\int_{t_0}^{t_f}p(x(t),t)dt$, where $p(x,t)$ is a sixth degree polynomial in $x$ with coefficients independent of $\sigma$ and satisfying $\lim_{x\rightarrow \pm \infty}p(x,t)=\infty$. If $m$ denotes the absolute minimum value of $p(x,t)$ on the interval $[t_0,t_f]$, and $M=2\varepsilon^{-1}\Delta V+(t_f-t_0)m$, then
\begin{equation*}
I_{\sigma}[x]> \|\dot{x}\|_{L^2}^2+2\varepsilon^{-1}\Delta V+(t_f-t_0)m=\|\dot{x}\|_{L^2}+M.
\end{equation*}
\end{proof}
\begin{corollary}\label{Cor:Existence}
For all $\sigma\geq 0$ there exists $x^*\in \mathcal{A}$ such that for all $x\in \mathcal{A}$,
$I_{\sigma}[x^*]\leq I_{\sigma}[x]$.
\end{corollary}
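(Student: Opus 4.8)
The plan is to establish the corollary by the direct method of the calculus of variations, with Theorem \ref{Thm:Coercivity} supplying the essential coercivity. First I would fix $\sigma \geq 0$ and set $I^* := \inf_{x \in \mathcal{A}} I_{\sigma}[x]$. Theorem \ref{Thm:Coercivity} shows $I_\sigma$ is bounded below, so $I^* > -\infty$ and a minimizing sequence $\{x_n\} \subset \mathcal{A}$ with $I_{\sigma}[x_n] \to I^*$ exists. Since $\{I_{\sigma}[x_n]\}$ is bounded, coercivity forces $\|\dot{x}_n\|_{L^2}$ to be bounded. Because each $x_n$ satisfies the same boundary condition $x_n(t_0)=x_0$, writing $x_n(t)=x_0+\int_{t_0}^t \dot{x}_n(s)\,ds$ and applying Cauchy--Schwarz bounds $\|x_n\|_\infty$ in terms of $\|\dot{x}_n\|_{L^2}$; hence $\{x_n\}$ is bounded in $H^1([t_0,t_f];\mathbb{R})$.

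Next I would extract a limit. As $H^1$ is a Hilbert space, the bounded sequence $\{x_n\}$ has a subsequence (not relabeled) with $x_n \rightharpoonup x^*$ weakly in $H^1$. In one dimension the embedding $H^1([t_0,t_f]) \hookrightarrow C([t_0,t_f])$ is compact, so along a further subsequence $x_n \to x^*$ uniformly. Uniform convergence preserves the endpoint values, giving $x^*(t_0)=x_0$ and $x^*(t_f)=x_f$, so $x^* \in \mathcal{A}$ is an admissible candidate.

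The central step is weak lower semicontinuity, $\liminf_n I_{\sigma}[x_n] \geq I_{\sigma}[x^*]$, which together with the minimizing property gives $I_{\sigma}[x^*]=I^*$. I would expand the square in \eqref{eq:rateFunctional} and treat three pieces separately: the term $\int \dot{x}_n^2\,dt = \|\dot{x}_n\|_{L^2}^2$ is weakly lower semicontinuous by weak lower semicontinuity of the $L^2$ norm; the cross term $-\tfrac{2}{\varepsilon}\int \dot{x}_n f(x_n,t)\,dt$ passes to the limit because $f(x_n,\cdot)\to f(x^*,\cdot)$ strongly in $L^2$ (by uniform convergence) while $\dot{x}_n \rightharpoonup \dot{x}^*$ weakly in $L^2$, so the pairing converges; and the remaining terms $\tfrac{1}{\varepsilon^2}\int f(x_n,t)^2\,dt$ and $\tfrac{\sigma^2}{\varepsilon}\int f_x(x_n,t)\,dt$, which depend on $x_n$ alone, converge by uniform convergence on the compact interval. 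Equivalently, one may invoke Tonelli's theorem, using that $L(t,x,\dot{x})$ is convex in $\dot{x}$, as already observed before the theorem.

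I expect the main obstacle to be precisely this last step, and specifically the cross term $\dot{x}\,f(x,t)$: the Lagrangian is not jointly convex in $(x,\dot{x})$, so weak $H^1$ convergence by itself does not suffice to pass to the limit. The resolution exploits the one-dimensional setting: since $f$ is polynomial and the minimizing sequence is uniformly bounded on a compact time interval, the compact embedding upgrades weak $H^1$ convergence to strong $C^0$ (hence strong $L^2$) convergence of $x_n$ and of every continuous function of $x_n$. This strong convergence is exactly what handles the non-convex and lower-order terms, while convexity-based lower semicontinuity is needed only for the leading $\dot{x}^2$ term.
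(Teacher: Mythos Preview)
Your proposal is correct and follows precisely the approach the paper intends: the paper does not write out a proof of the corollary but simply states that existence ``is then a corollary following standard techniques from the direct method of the calculus of variations,'' citing Theorem~\ref{Thm:Coercivity} for coercivity and convexity of $L$ in $\dot{x}$ for lower semicontinuity. Your argument is exactly that direct-method elaboration, and your careful treatment of the cross term via the compact embedding $H^1\hookrightarrow C^0$ is the standard way to handle the non-jointly-convex Lagrangian here. One minor caveat: Theorem~\ref{Thm:Coercivity} is stated only for $\sigma\in[0,1]$, while the corollary is stated for all $\sigma\geq 0$; the proof of Theorem~\ref{Thm:Coercivity} extends immediately to any fixed $\sigma$ (with $M$ depending on $\sigma$), which is all you need.
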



Local minimizers of (\ref{eq:rateFunctional}) satisfy the Euler-Lagrange equations: 
\begin{equation}\label{eqn:EL}
\begin{cases}
\ddot{x} = \frac{1}{\varepsilon^2}f(x,t)f_x(x,t) + \frac{\sigma^2}{2\varepsilon}f_{xx}(x,t) +\frac{1}{\varepsilon} f_t(x,t)\\
x(t_0)=x_0\\
x(t_f)=x_f
\end{cases}.
\end{equation}
These equations constitute a second order nonlinear boundary value problem in $x$ on the domain $[t_0,t_f]$, and can be rewritten in Hamiltonian form, for
time-varying Hamiltonian function
\begin{equation}
H(x,\Psi,t)=\frac{\Psi^2}{2}+\frac{1}{\varepsilon}f(x,t)\Psi-\frac{\sigma^2}{2\varepsilon}f_{x}(x(t),t),
\end{equation} 
as:
\begin{equation}\label{Eq:Hamiltonian}
\begin{cases}
\dot{x}=\Psi+\frac{1}{\varepsilon}f(x,t)\\
\dot{\Psi}=-\frac{1}{\varepsilon}f_x(x,t)\Psi+\frac{\sigma^2}{2\varepsilon}f_{xx}(x,t).
\end{cases}
\end{equation}
Here the `momentum' $\Psi=\dot{x}-\varepsilon^{-1}f(x,t)$ measures the deviation from the deterministic flow.
The FW functional in (\ref{eq:rateFunctional}) can then be expressed in terms of $\Psi$ as $I_{\text{FW}}[x(t)]=\int_{t_0}^{t_f} \left(\Psi(x,t)\right)^2dt$.

The path integral formulation assumes that the noise strength $\sigma$ is small. If $\sigma$ is sufficiently small compared to the timscale ratio $\varepsilon$, such that $\sigma^2/\varepsilon\ll 1$,  the naive regular perturbation in $\sigma$ of (\ref{Eq:Hamiltonian}), at leading order in $\sigma$, is:
 \begin{equation}\label{eqn:smallsigma}
 \dot{x}=\Psi+\frac{1}{\varepsilon}f(x,t), \quad
\dot{\Psi}=-\frac{1}{\varepsilon}f_x(x,t)\Psi.
\end{equation}
In this system $\Psi=0$ forms an invariant submanifold foliated by solutions of the deterministic dynamics $\dot{x}=\varepsilon^{-1}f(x,t)$. Thus, for $\sigma\rightarrow 0$, optimal paths are well-approximated by heteroclinic orbits of (\ref{Eq:Hamiltonian}) connecting deterministic solutions \cite{dykman2001activated}. Moreover, (\ref{eqn:smallsigma}) is the Hamiltonian form of the Euler-Lagrange equations for the FW rate functional $I_\text{FW}$. Since $I_\text{OM2}$ is independent of $\dot{x}$, minimizers of $I_{\sigma}$ will converge uniformly as $\sigma \rightarrow 0$ to a minimizer of $I_\text{FW}$; see Appendix \ref{sec:TheoremApend}.

%
%
\section{Non-adiabatic forcing regime: $\varepsilon = {\cal O}(1)$}\label{sec:noiseDriftBalance}
The remainder of the paper addresses the intermediate regime I in Figure \ref{Fig:Example} 
where $\varepsilon$ is ${\cal O}(1)$. In this case, there is no small parameter to exploit to simplify the analysis.
We investigate the contribution $I_\text{OM2}$ to the functional \eqref{eq:rateFunctional} in determining optimal paths, which are then validated by Monte Carlo simulations. This section concludes with a parameter study of \eqref{eqn:ModelSDE}, based on the path integral method using the Onsager-Machlup functional (\ref{eq:rateFunctional}).

\subsection{Gradient flow}\label{sec:gradFlow}
To solve the second order boundary value problem (\ref{eqn:EL}) we introduce a gradient flow $u_s=-\frac{\delta I_{\sigma}[u]}{\delta u}$ associated with $I_\sigma$. Specifically, introducing the artificial ``time'' $s\geq 0$ we consider solutions $u(s,t):\mathbb{R}^+\times [t_0,t_f]\mapsto \mathbb{R}$ to the following partial differential equation:
\begin{equation}\label{eq: GradFlow}
\begin{cases}
u_{s}=u_{tt} - \frac{1}{\varepsilon^2}f(u,t)f_u(u,t) - \frac{\sigma^2}{2\varepsilon}f_{uu}(u,t) - \frac{1}{\varepsilon}f_t(u,t), \\
u(s,t_0) = x_0, u(s,t_f) = x_f, \quad \text{for }s\geq 0\\
u(0,t)=x_g(t), \quad \text{for } t\in[t_0,t_f]
\end{cases},
\end{equation}
where $x_g(t)$ is a sufficiently smooth initial guess of the most probable path. To numerically solve (\ref{eq: GradFlow}) we use MATLAB's built in parabolic initial-boundary value solver {\sl{pdepe}} \cite{skeel1990method}. 

Stationary solutions of \eqref{eq: GradFlow} correspond to solutions of the Euler-Lagrange equations (\ref{eqn:EL}). If we let $u(s,t)$ denote a solution to (\ref{eq: GradFlow}), it follows that
\begin{equation}
\frac{d}{ds}I_\sigma(u(s,\cdot)) =\int_{t_0}^{t_f} \frac{\delta I_{\sigma}}{\delta u}\frac{\partial u}{\partial s}dt=-\int_{t_0}^{t_f} \left(\frac{\partial u}{\partial s}\right)^2dt\leq 0,
\end{equation}
and hence $I_{\sigma}[u(s,\cdot)]$ is a monotonically decreasing function in $s$. Furthermore, by Theorem \ref{Thm:Coercivity}, $I_{\sigma}[u(s,\cdot)]$ is bounded from below and thus the pointwise limit $\lim_{s\rightarrow \infty}I_{\sigma}[u(s,\cdot)]$ exists and $\lim_{s\rightarrow \infty} \frac{d}{ds}I_{\sigma}[u(s,\cdot)]=0$. $I_{\sigma}$ is a Lyapunov function on $\mathcal{A}$ and, by Corollary \ref{Cor:Existence}, $\lim_{s \rightarrow \infty}u(s,t)=x(t)$, where $x(t)$ is a local minimizer of the OM functional and hence a solution to the Euler-Lagrange equations \cite{robinson2001infinite}. Convergence to a steady state is assessed by the value of $\left|\frac{\delta I}{\delta u}\right|$, i.e. when the Euler-Lagrange equations are approximately satisfied.

\begin{figure}[ht]
    \centering
    \begin{subfigure}[b]{0.235\textwidth}
        \includegraphics[width=\textwidth]{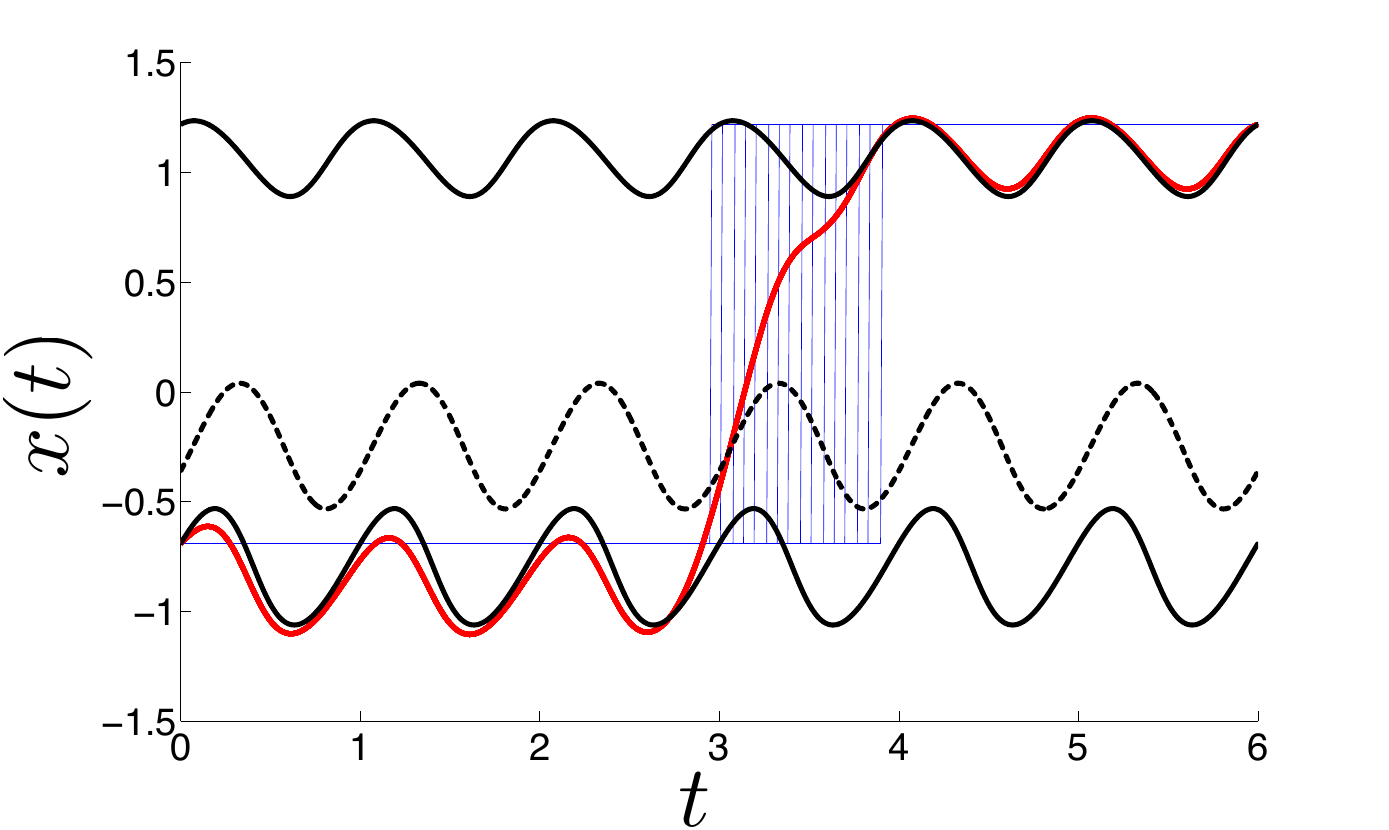}
        \caption{}
    \end{subfigure}    
    \begin{subfigure}[b]{0.235\textwidth}
        \includegraphics[width=\textwidth]{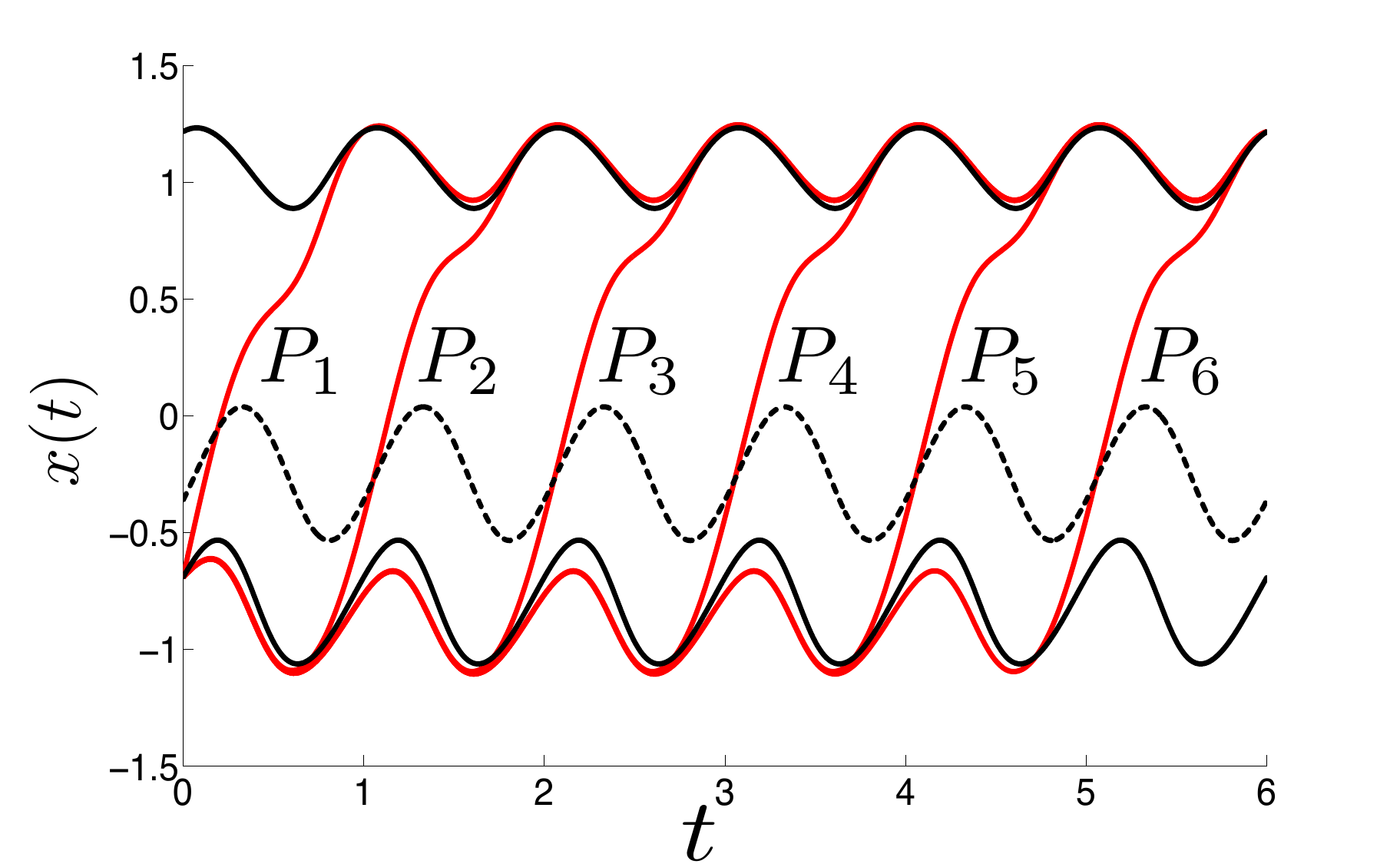}
        \caption{}
    \end{subfigure}
    \caption{
Results for $(\varepsilon,A,\alpha,\sigma) = (0.25, 0.5,  0.2,  0.4$).
(a): Examples of $16$ initial guesses (blue) evolving under the gradient flow to the same transition path (red). (b): Stationary solutions of \eqref{eq: GradFlow} for $64$ initial guesses of the form \eqref{eq:initialGuess} with  uniformly spaced  $t_j\in[0,6]$. Note that $P_3, ..., P_6$ are translates, by successive forcing periods, of $P_2$.}
\label{fig: varyTm}
\end{figure}

For the gradient flow we impose a starting position at $x_0 = x_l^*(t_0)$ and a final position at $x_f = x_u^*(t_f)$. Figure~\ref{fig: varyTm}  presents results for six forcing periods between $t_0$ and $t_f$. The initial condition for the gradient flow is chosen as a piecewise constant function 
\begin{equation}\label{eq:initialGuess}
u(0,t)=
\begin{cases}
x_0, \quad t\in[t_0,t_j]\\
x_f,\quad t\in (t_j,t_f],
\end{cases}
\end{equation}
where $t_j$ is varied on a uniform grid between $t_0$ and $t_f$. A few examples of such initial conditions $u(0,t)$ are shown as blue curves in Figure \ref{fig: varyTm}(a), where $t_j$ takes values uniformly in $[3,4]$ and all of the initial solutions converge to the same stationary solution represented by the red curve. In this example we find six different stationary solutions $\{P_i: i = 1...6\}$, each of which represents a different local minimizer of $I_\sigma$ given by \eqref{eq:rateFunctional}; see Figure \ref{fig: varyTm}(b). The paths $\{P_2,...,P_6\}$ are the same up to translation by a multiple of the forcing period. $P_1$ is also a local minimizer of $I_\sigma$, but its form depends on $t_0$, whereas the remaining optimal paths are insensitive to the choice of $t_0$ and $t_f$; see Appendix \ref{sec:independenc} for details.

\begin{figure}[ht]
        \includegraphics[width=0.4\textwidth]{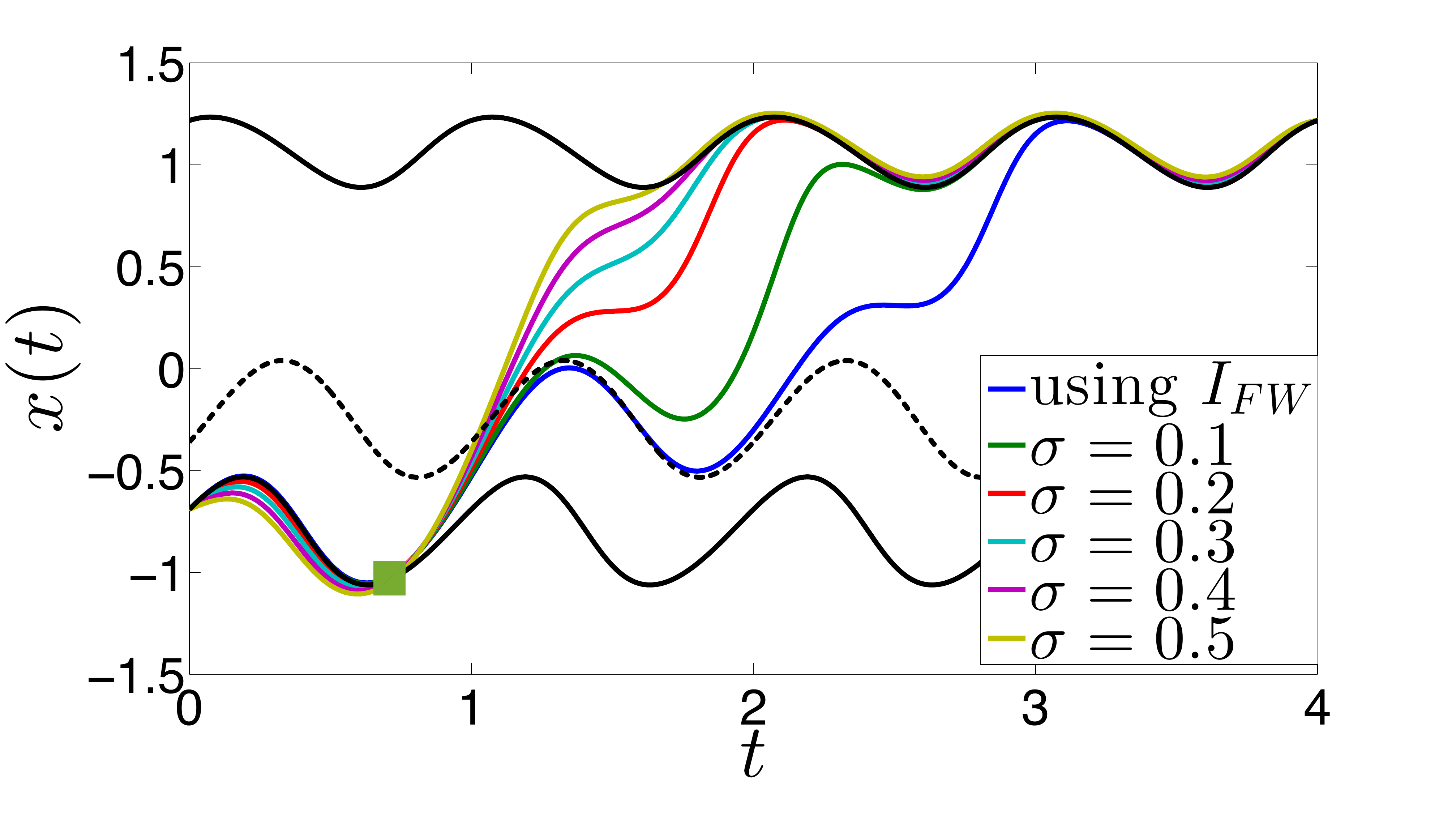}
\caption{Stationary solutions to \eqref{eq: GradFlow} for $(\varepsilon,A,\alpha) = (0.25, 0.5, 0.2)$ and various $\sigma$, where ``using $I_{FW}$" corresponds to setting $\sigma=0$ in the functional \eqref{eq:rateFunctional}. The green square marks where paths cross $x_l^*$. }\label{fig: FWvsOM}
\end{figure}

Figure \ref{fig: FWvsOM} compares optimal paths, obtained from the gradient flow \eqref{eq: GradFlow}, for various values of $\sigma$. This includes the setting $\sigma=0$, which corresponds to using the FW functional. This example enforces boundary conditions so that all optimal transition paths start at the same initial point on $x_l^*$ and terminate four periods later on $x_u^*$. We make the following observations:
\begin{enumerate}
\item Before crossing $x_m^*$, the optimal paths are concentrated along a similar trajectory, and they all leave $x_l^*$ around the same phase of the forcing, marked by the green square. However, after crossing $x_m^*$, the optimal paths spread out in a way that is strongly dependent on the noise intensity.  

\item The minimizer of $I_\text{FW}$ stays near $x_m^*$ for a full cycle before it exits to $x_u^*$. This is in contrast to the minimizers of $I_\sigma$ where the path may transition quickly to $x_u^*$. 

\item Optimal paths for different $\sigma$ cross $x_m^*$ at different times, and quickly relax to follow the deterministic flow after crossing. 

\item For larger  $\sigma$, there is a small gap between the optimal path and the stable periodic solutions before and after the transition.  This points to a possible failure of the minimizers of the OM functional to adequately approximate the most probable path for larger $\sigma$ since we  expect the system to track the deterministic dynamics there.
\end{enumerate}

\subsection{Comparison with Monte Carlo simulations}\label{sec:MCcomparison}
\begin{figure}[ht]
    \centering
        \includegraphics[width=0.5\textwidth]{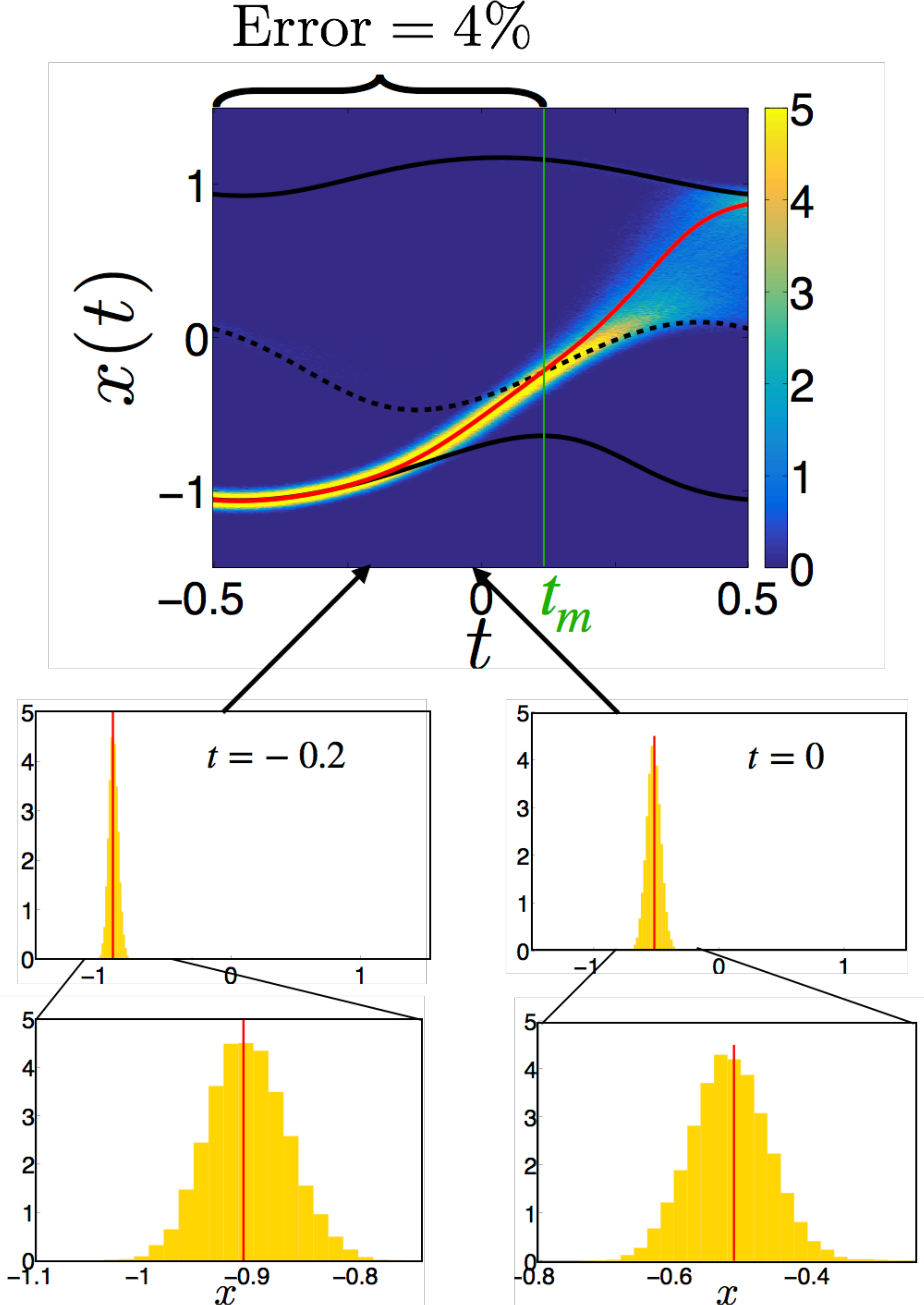}
    \caption{Comparison of optimal paths with time-evolved distribution obtained via Monte Carlo simulations at $(\varepsilon, A, \alpha, \sigma) = (0.1, 0.3, 0.15, 0.2)$. The mean absolute percentage error between the mode of the time-evolved distribution and the optimal path for $t\in [-0.5, t_m]$ is $4\%$, where $t_m \approx 0.1$. Two examples of normalized distributions of $x$ are shown in yellow at $t=-0.2$ and $t=0$, and compared to the value of the optimal path position, indicated by a red vertical line. The bottom two figures are zoomed-in distributions in $x$ at $t = -0.2$ and $t=0$. $E = 8\times 10^{-6}$ defined as \eqref{eqn: meansquarederror} in the convergence test. There are $N = 12772$ tipping samples; $M=50$ bins used for estimating the distribution of $x\in [-1.5, 1.5]$, and the number of time steps in each sample is $K = 10^4$.}
\label{fig: compMC}
\end{figure}

We numerically validate the path integral method by comparing optimal paths with the results of Monte Carlo simulations. One example is demonstrated in Figure \ref{fig: compMC} for $(\varepsilon, A, \alpha, \sigma) = (0.1, 0.3, 0.15, 0.2)$.
The reported mean squared error associated with halving the number of samples quantifies our convergence test of the Monte Carlo results to the distribution. It is defined as,
\begin{equation}\label{eqn: meansquarederror}
E = \sum_{k = 1}^{K}\sum_{m = 1}^M \frac{\left(h_{N}^{(m,k)} - h_{N/2}^{(m,k)}\right)^2}{MK},
\end{equation}
where $h_{N}^{(m,k)}$ is the fraction of samples in each bin $m$;  see Appendix \ref{sec:MCappendix} for more details about the convergence test and six comparisons for different values of $(\varepsilon, \sigma)$.

Figure \ref{fig: compMC} compares results obtained with the path integral method to those from Monte Carlo simulations for parameters $(\varepsilon, A, \alpha, \sigma) = (0.1, 0.3, 0.15, 0.2)$. This figure shows the distribution of the sample paths on the (unwrapped) cylinder and two examples of distributions in $x$ at $t = -0.2$ and $0$, compared with values of optimal paths (marked by red lines) at those times. These distributions are fairly concentrated as shown in the middle row of Figure \ref{fig: compMC}. 
We zoom in near the peaks of the distributions, in the bottom row, to better reveal the agreement between the optimal path and the Monte Carlo simulation results. Moreover, we notice that distributions broaden after $t_m$, the time when paths cross the unstable periodic solutions $x_m^*$, after which paths will simply follow the deterministic flow to $x_u^*$.


\begin{table}[h]
\begin{tabular}{| c | c | c | c |}
  \hline
   & $\sigma$ & N & MAPE \\
   \hline
   & 0.2 & $49002$ & $5.0\%$ \\
    $\varepsilon = 0.25$ & 0.25& $46010$ & $5.6\%$ \\
       & 0.3& $21183$ & $6.3\%$ \\
       \hline
      & 0.2&$10662$ & $6.0\%$ \\
    $\varepsilon = 0.4$ & 0.25& $24882$ & $6.8\%$ \\
       & 0.3& $48160$ & $8.3\%$ \\
       \hline
\end{tabular}
       \caption{Results of comparison for $\varepsilon = 0.25$ and $\varepsilon = 0.4$ with noise intensity $\sigma = 0.2, 0.25$ and $0.3$. $(\alpha, A)$ are fixed at $(0.15, 0.7)$. Illustration of the comparison can be found in Figure \ref{fig: more_MC_comparison} in Appendix \ref{sec:MCappendix}. $N$ is the number of tipping samples used in the time-evolved distribution; MAPE (defined in \eqref{eqn: MAPE}) is the mean absolute percentage error between the mode of the distribution and the optimal path for $t\in[-0.5, t_m]$, where $t_m\in[0, 0.3]$ approximately for the parameters we take. }\label{table: MCcomparison}
\end{table}

To quantify the discrepancy between optimal path and Monte Carlo simulation results, we compute the mean absolute percentage error (MAPE) between the mode of time-evolved distributions and optimal paths for $t\in [-0.5, t_m]$, i.e. before crossing $x_m^*$. 
The MAPE is calculated as 
\begin{equation}\label{eqn: MAPE}
MAPE = \frac{100\%}{k}\sum_{i=1}^{k}\left|\frac{(m_i - o_i)}{\bar{m}}\right|, 
\end{equation}
where $m_i$ denotes the mode of the distribution at each time step $t_i$, $o_i$ is the value of optimal path in $x$ at $t_i$, $\bar{m}=\sum_{i=1}^k m_i/k$ is the average of the mode of the distribution over $t\in [-0.5, t_m]$, and $k$ represents the total number of time steps in $t\in [-0.5, t_m]$.  More comparison results for different values of $\varepsilon$ and $\sigma$ are summarized in Table \ref{table: MCcomparison} and plots are presented in Appendix \ref{sec:MCappendix}. In these comparisons, $\alpha$ and $A$ are fixed at $0.15$ and $0.7$, respectively. Results from the Monte Carlo comparison suggest that the optimal path obtained from the OM functional gives a good estimate of the mode of the distribution prior to crossing $x_m^*$ in this intermediate forcing regime.

\subsection{Parameter study}\label{sec:parameterStudy}
In this section we use the optimal path method to investigate the intermediate forcing regime I of Figure \ref{Fig:Example}, and explore features of the deterministic problem which influence tipping phase. Figure \ref{fig: FWvsOM} demonstrated that transitions from the lower stable solution, $x_l^*$, to the unstable solution, $x_m^*$, may be insensitive to changes in $\sigma$, leading to a well-defined transition path associated with this part of the transition. Here we explore the transitions between $x_l^*$ and $x_m^*$ more systematically for parameters in the nullcline regions $R_2$ and $R_3$, which were introduced in Figure \ref{Fig:DynamicsDiagram}(b) for which the number of nullcline curves corresponds to $2$ and $1$, respectively.

Figure \ref{fig:fixEpsVarySigma} summarizes results of a parameter study for $\alpha$ fixed at $0.1$, $\sigma\in [0.1, 0.4]$ and $\varepsilon = 0.25$ and $0.4$. $A$ is $0.4$ in the left column and $0.7$ in the right column, each associated with the different nullcline configurations that are in $R_2$ and $R_3$ of Figure \ref{Fig:DynamicsDiagram}(b). We make the following observations:

\begin{enumerate}
\item For all four deterministic parameters sets, the optimal paths leave $x_l^*$ near when a nullcline crosses $x_l^*$, indicating a time when the flow changes direction from contracting to expanding near $x_l^*$. 
\item  The transition between $x_l^*$ and $x_m^*$ is concentrated around the same trajectory and varies little with the change of $\sigma$. Moreover, the transitions from $x_l^*$ to $x_m^*$ all happen in the gap between the nullclines where the deterministic flow is expanding from $x_l^*$, independent of noise intensity. 
\item For $\varepsilon = 0.25$, the optimal paths arrive at $x_m^*$ around the same time for different values of $\sigma$. However for $\varepsilon = 0.4$, the times $t_m$ differ by quite a bit. This is because for larger values of $\varepsilon$ the flow is weaker and requires larger noise intensity to see a comparably fast transition to that obtained with $\varepsilon=0.25$.
\item Upon crossing the unstable periodic solution $x_m^*$, the optimal paths spread out and follow the deterministic flow to reach the upper stable periodic solution $x_u^*$. 
\end{enumerate}

\begin{figure}[h]
\centering
\includegraphics[width=0.5\textwidth]{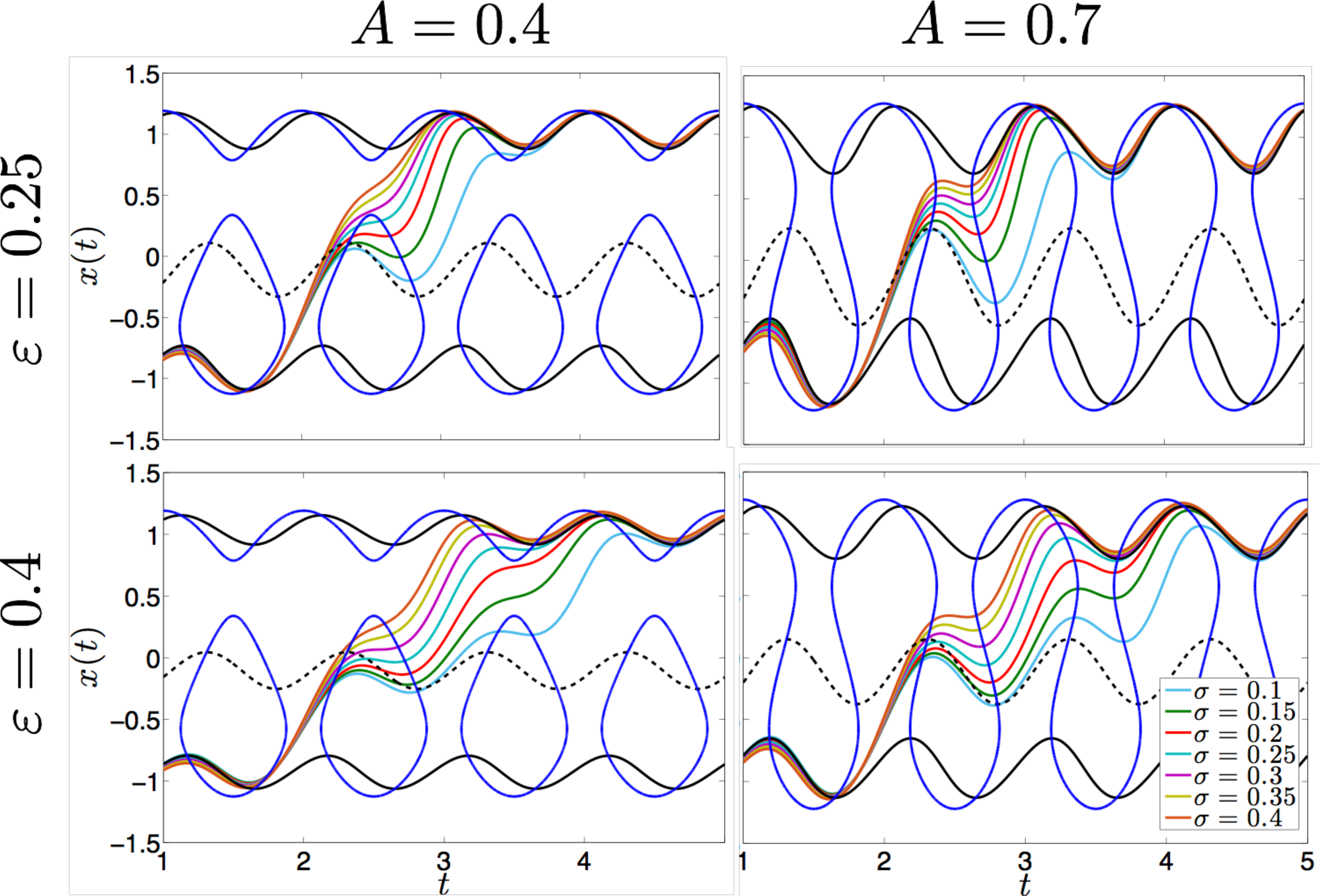}
\caption{Summary of optimal paths obtained using the path integral method for various noise intensities $\sigma \in [0.1,0.4]$. Nullclines of the deterministic system \eqref{eqn:ODE} are represented by blue curves. Black curves represent the periodic solutions of \eqref{eqn:ODE}. $\alpha$ is fixed at $0.1$. $\varepsilon = 0.25$ in the first row and $0.4$ in the second row. $A = 0.4$ in the first column and $0.7$ in the second column.}
\label{fig:fixEpsVarySigma}
\end{figure}

\begin{figure}[h]
\includegraphics[width=0.5\textwidth]{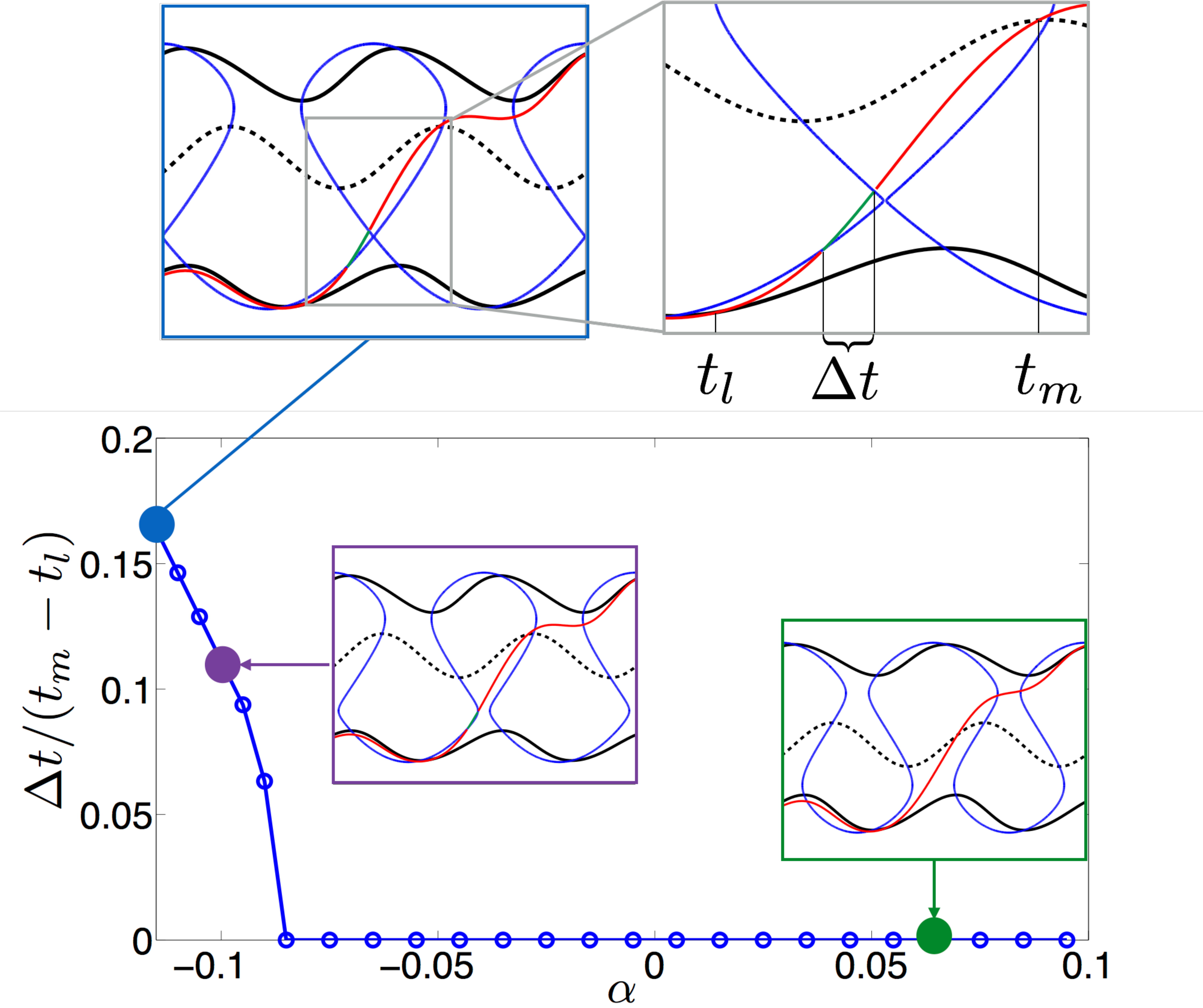}
\caption{$\Delta t/(t_m-t_l)$ as a function of $\alpha$, where $\Delta t$ is the time spent in regions of downward flow, denoted by green curves. The insets represent the optimal paths plotted with the nullclines and the periodic solutions for $\alpha = -0.135, -0.1$ and $0.06$. The rest of the parameters are fixed and are taken as $(A, \varepsilon, \sigma) = (0.5, 0.25, 0.4)$.}
\label{fig:varyAlpha}
\end{figure}
We propose, based on the explorations summarized by Figure \ref{fig:fixEpsVarySigma}, that a good deterministic predictor of the tipping phase $t_l$ is the point where the flow near $x_l^*$ changes direction from contracting to expanding. This point is associated with an appropriate nullcline crossing the lower stable orbit $x_l^*$.
Moreover, for the flow geometries of regions $R_2$ and $R_3$, there is a gap between the nullclines; we conjecture that this gap serves as a passageway through which the most likely path squeezes. To further explore this claim, we focus on parameter sets in $R_3$ where there is a single nullcline, i.e. as in the cases of the right column in Figure \ref{fig:fixEpsVarySigma}. We compute the fraction of the transition time from $x_l^*$ to $x_m^*$ which lies above the nullcline (`outside the gap') as a function of $\alpha$ which controls the width of the gap; see Figure \ref{fig:varyAlpha}. For $\alpha$ small (i.e. $\alpha$ near $-0.135$, the boundary between $R_2$ and $R_3$), the gap becomes very narrow, yet more than $80\%$ of the transition from $x_l^*$ to $x_m^*$ occurs `inside the gap' below the nullcline, indicating the flow is upward. For $\alpha> -0.095$ and below the saddle-node bifurcation, the entire transition from $x_l^*$ to $x_m^*$ is in this gap. 

\section{Discussion}\label{sec:Discussion} 
In this paper we explored a notion of a ``tipping phase'' for a periodically forced Langevin equation \eqref{eqn:ModelSDE}, and extended the deterministic conditions that set it from the adiabatic to the non-adiabatic regime. Previous work in the adiabatic regime shows that the most likely transition occurs when the potential barrier height achieves its minimum \cite{berglund2002metastability}, which for \eqref{eqn:ODE} is also the time when the separation between stable and unstable orbits is a minimum. Outside of the adiabatic regime, the two orbits exhibit roughly constant separation over time and the potential may lose its double well structure over an interval of time. 

As a tool in our study we used the path integral formulation to compute the optimal paths between two stable solutions. Specifically, we used the Onsager-Machlup functional whose minimizers we defined as optimal transition paths. The OM functional is the sum of the Freidlin-Wentzell rate functional and a second term that scales with $\sigma^2/\varepsilon$, the latter of which controls the ratio of the noise to drift strength. We argued, and showed, that this second term in the OM functional can be significant in regime I in Figure \ref{Fig:Example} when $\varepsilon = \mathcal{O}(1)$ and $\sigma\in (0,1)$, and in particular when the timescale ratio $\varepsilon$ and noise intensity $\sigma$ are comparable. 

Our findings in Section \ref{sec:parameterStudy} on the parameter study suggest that the time when the flow changes from contracting to expanding about the starting stable solution is a robust estimator for the tipping phase, which we defined to be $t_l$, the time when the path leaves $x_l^*$. 
In some sense this can be seen as an extension of the adiabatic regime results in that the transition takes place during the phase of the forcing when the barrier height vanishes (effectively, when it is minimal), but our numerical study highlights several key differences. Notably, the barrier height analogy breaks down in our case, as tipping starts occurring during a phase when only a single minimum to the associated potential exits. More importantly, while in the adiabatic regime tipping is nearly instantaneous (on time scale of forcing period) at the time point when the minimal potential barrier height is reached, in our case of an intermediate forcing regime, tipping takes much longer, making it necessary to define a tipping time more carefully. Our work suggests this estimator, $t_l$, and shows that long noisy paths attempt to make the transition by squeezing and maneuvering between portions of the nullclines.

\subsection{Future directions}\label{sec:Future}
We conclude with some future avenues of research that come out of this work and a discussion of some of the limitations of our approach that suggest open problems.
\paragraph{Bifurcations in the Hamiltonian system}{The Hamiltonian system (\ref{Eq:Hamiltonian}) admits several periodic solutions $\left(x^H(t), \Psi^H(t)\right)$ and has a rich bifurcation structure in $\sigma$ and $\varepsilon$. For example, when $\sigma = 0$ and for typical parameters in our studies, there exist five periodic solutions
 of (\ref{Eq:Hamiltonian}); three of these correspond to the solutions $x^*(t)$ of (\ref{eqn:ODE}) and are given by $\left(x^H(t), \Psi^H(t)\right) = \left(x^*(t),0\right)$, while the other two are periodic solutions for which $\Psi^H \neq 0$ (e.g.\ at $A=0$, they satisfy $x = \pm \sqrt{1/3}$). We conducted a preliminary bifurcation analysis of (\ref{Eq:Hamiltonian}) using AUTO-07P \cite{doedel1998auto} and found that qualitative differences in the 
tipping trajectories in Figure \ref{fig:fixEpsVarySigma}(d-f) roughly align with saddle-node bifurcations of the Hamiltonian dynamics under changes of $\sigma$ and $\varepsilon$. It would be interesting to explore whether the bifurcations in the Hamiltonian system can serve as a method for partitioning a $(\sigma,\varepsilon)$ phase diagram into qualitatively different tipping behaviors.}

\paragraph{Higher dimensions}{Our conclusions regarding the way  the tipping phase  aligns with a nullcline crossing are inherently tied to the low dimensionality of the problem we considered. Specifically, we considered a one degree of freedom, periodically forced system, for which an (extended) $1+1$ dimensional phase plane analysis allowed us to gain insights. While our approach may not be limited to 1-D systems, since we could base an investigation on the OM functional in higher dimensions, the natural analog of the nullclines is unclear. Are there situations where a separator of expanding and contracting regions could be identified, and could allow us to determine a preferred tipping phase based on deterministic properties? 

\paragraph{Limitations of the path integral approach}{The benefit of the path integral approach is that it  efficiently calculates optimal paths without performing Monte Carlo simulations. However, with this approach we lose higher order information that is encoded in the distribution of tipping events. In contrast, in the Freidlin--Wentzell theory of large deviations these quantities can be recovered from the so called quasi-potential in the limit of vanishing noise strength \cite{freidlin2012random,ren2004minimum}. A natural and open question is how to use the OM functional to obtain similar quantities in an asymptotic regime in which $\varepsilon=\mathcal{O}(1)$ and noise is small (noise-drift balanced) or $\varepsilon \ll 1$ (adiabatic). This may be important in the context of the question we pose about a tipping phase for the following reason. If the distribution of  tipping events is too broad then there is no well defined phase. It only gains meaning when the distribution is narrow compared to the period of the forcing. We expect the distribution will narrow with decreasing in $\sigma$, as illustrated in the contrasting histograms Figure \ref{fig: histogramsVarySigma}, which were computed with different noise intensities.
\paragraph{Application}{Our investigation arose, originally, in the setting of tipping points in a bistable conceptual model of Arctic sea ice, which has strong seasonal forcing. 
We wondered whether there is a  season of the year when this important component of the climate system is most vulnerable to tipping from its current state of perennial ice cover to an alternative state where the Arctic is, year-round, ice-free. 

\begin{figure}[ht]
\centering
\begin{subfigure}[b]{0.4\textwidth}
\includegraphics[width=\textwidth]{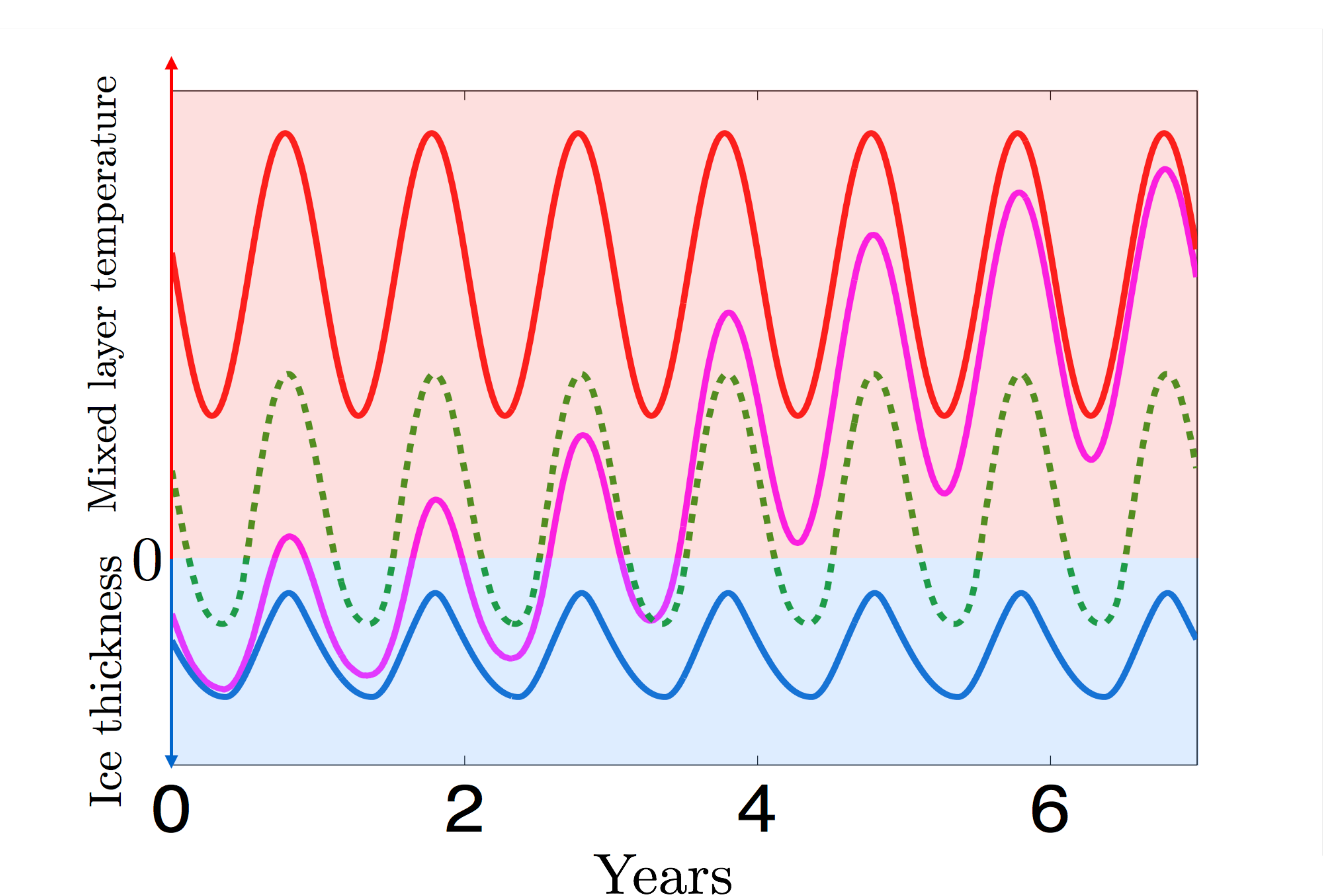}
\caption{}
\end{subfigure}
\begin{subfigure}[b]{0.16\textheight}
\includegraphics[width=\textwidth]{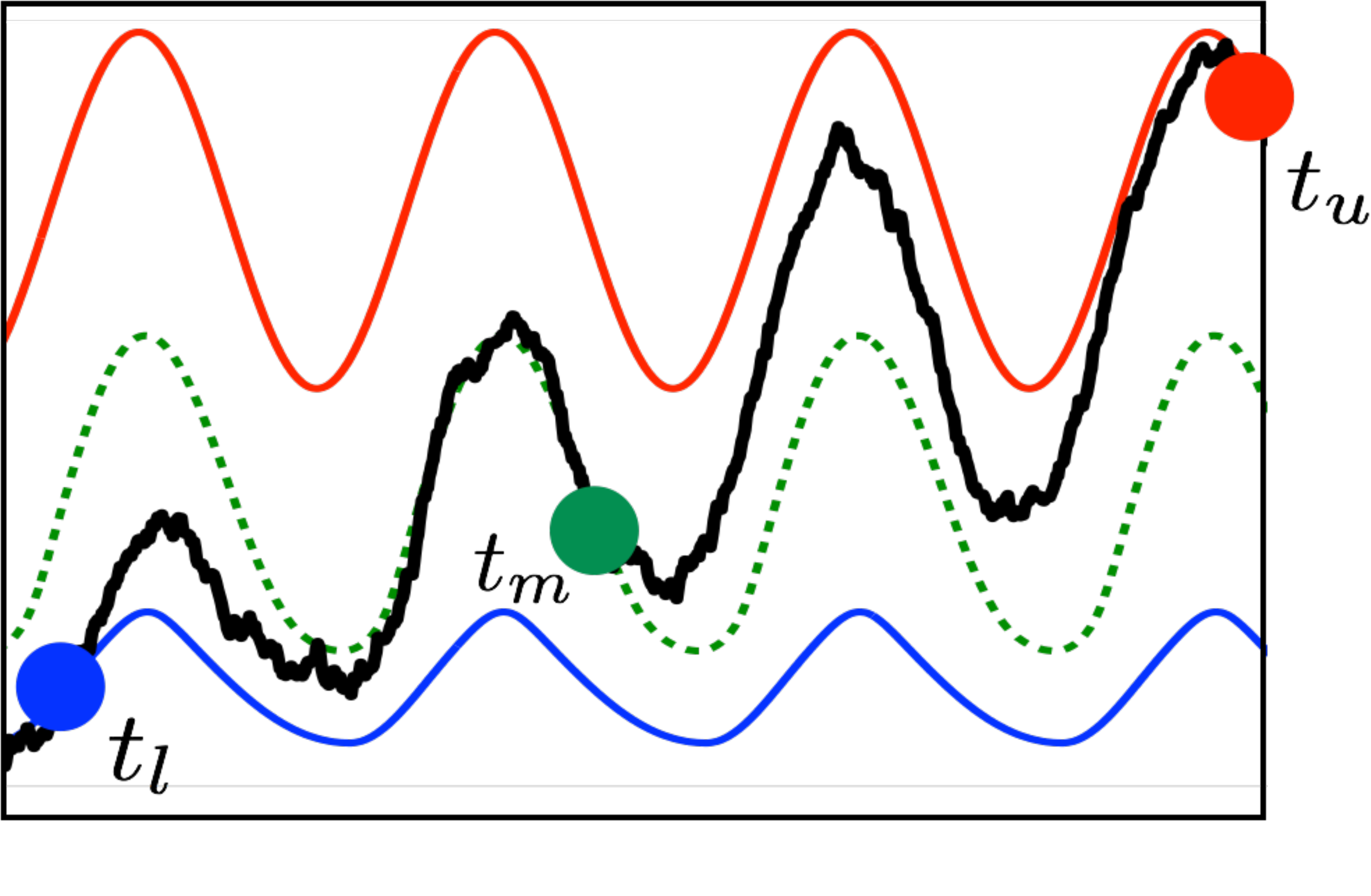}
\caption{}
\end{subfigure}
\begin{subfigure}[b]{0.16\textheight}
\includegraphics[width=\textwidth]{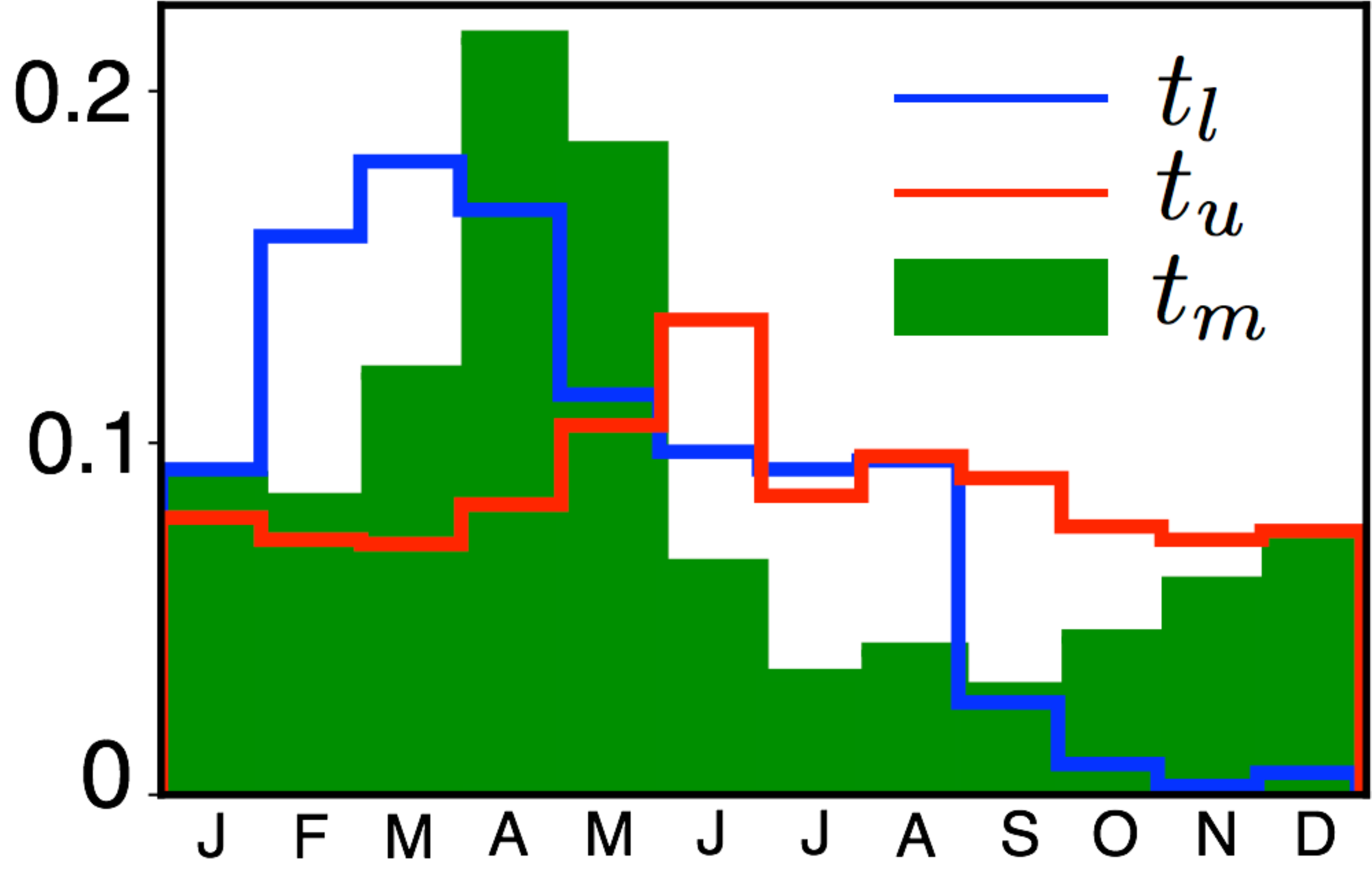}
\caption{}
\end{subfigure}
\caption{Summary of noise-induced tipping events for a version of the Arctic sea ice model \cite{eisenman2012factors}. (a): Vertical scale is surface enthalpy  density, proportional to ocean mixed-layer temperature when positive (red) and ice thickness (blue) when negative. Red (blue) curves are the stable perennially ice-free (ice-covered) state; green curve is the unstable seasonally ice-free state. Magenta curve is the mean of $832$ transition samples. (b-c): Percentage histograms of the departure times from neighborhoods of the perennially ice-covered state ($t_l$) in blue and the seasonal state ($t_m$) in green, and the arrival times at a neighborhood of the perennially ice-free state ($t_u$) in red;  
neighborhood size chosen so that the distribution is invariant when the size is halved. Model parameters are taken from Table 1 in the paper by Eisenman \cite{eisenman2012factors} with $\tilde{L}_m=1.1$, and $\sigma=0.17$. Each simulation runs for $1000$ years. ($E= 0.4\%$ in the Monte Carlo convergence test.) }
\label{fig:seaIce}
\end{figure}
Our preliminary investigations were based on an Arctic sea ice model due to Eisenman and Wettlaufer  \cite{eisenman2009nonlinear}. Their model, an energy-balance column model of the Arctic,  captures the strong seasonal variation in incoming solar and outgoing long-wave radiation over the course of the year, and the key feedbacks associated with the ice-albedo and the sea ice thermodynamics. It is a 1-D nonautonomous dynamical system which shows bistability between the current  Arctic state and one where the Arctic is ice-free. 
We illustrate the noise-induced tipping phenomenon in this setting, using a version of the model due to Eisenman \cite{eisenman2012factors}, with additive white noise. A stochastic version of this model has also been investigated by Moon and Wettlauffer \cite{moon2017stochastic}, who consider both additive and multiplicative noise. 

The Eisenman-Wettlaufer (EW) model \cite{eisenman2009nonlinear} has a prominent hysteresis loop under changes in greenhouse gases, akin to Figure \ref{Fig:DynamicsDiagram}(a). Here we consider the parameters in this model corresponding to the current state of the Arctic, which has perennial ice-cover and is bistable, in this model, with a perennially ice-free state; see the deterministic orbits in Figure \ref{fig:seaIce}(a) and (b). Figure \ref{fig:seaIce}(c) presents distributions obtained from Monte Carlo simulations of the model with additive white noise. In contrast with our findings of the histograms for \eqref{eqn:ModelSDE}, summarized by Figure \ref{fig: histogramsVarySigma}, which showed the most prominent peak in the distribution for $t_l$,  we find that the distribution of $t_m$ is most peaked for the EW model, with a peak in April, which is, interestingly, well before the summer melting season. In Figure \ref{fig:seaIce}(c) we shift all the tipping samples to cross the seasonal state in the third year and plot their mean (magenta curve). We note  that, on average, the tipping events take many years to complete the transition, even with relatively strong noise intensity. Moreover the noisy trajectories often stay in a neighborhood of the unstable seasonally ice-free state for almost a year or more before leaving for the perennially ice-free state, making it challenging to identify `a tipping phase', despite the peaked distributions. These simulations appear to be in 
a parameter regime where the deterministic relaxation dynamics are weak compared to the noise strength driving the tipping events. We also note that the model has a natural non-smooth limit, leading to a discontinuity boundary at $E = 0$,  due to the transition from ice-covered to ice-free dynamics \cite{hill2016analysis}. To further investigate its tipping events,  considering the model switches that take place, with the seasonal effects, along the unstable seasonally ice-free state may be important.}

\section*{Acknowledgements}
We thank Julie Leifeld for some early discussions. AV thanks Bj\"{o}rn Sandstede for his guidance in AUTO. We thank Paul Ritchie and Jan Sieber for introducing us to the path integral approach. MS, YC, and AV are partially funded by the AMS Mathematics Research Communities (NSF Grant No. 1321794). 
The work of MS and YC is funded in part by NSF DMS-1517416. The work of AV and JG was supported in part by NSF-RTG grant DMS-1148284, and AV is currently supported by the Mathematical Biosciences Institute and the NSF under DMS- 1440386.

\bibliographystyle{plain}
\bibliography{references}

\begin{thebibliography}{10}

\bibitem{ashwin2012tipping}
Peter Ashwin, Sebastian Wieczorek, Renato Vitolo, and Peter Cox.
\newblock Tipping points in open systems: bifurcation, noise-induced and
  rate-dependent examples in the climate system.
\newblock {\em Phil. Trans. R. Soc. A}, 370(1962):1166--1184, 2012.

\bibitem{berglund2002metastability}
Nils Berglund and Barbara Gentz.
\newblock Metastability in simple climate models: pathwise analysis of slowly
  driven langevin equations.
\newblock {\em Stochastics and Dynamics}, 2(03):327--356, 2002.

\bibitem{berglund2002sample}
Nils Berglund and Barbara Gentz.
\newblock A sample-paths approach to noise-induced synchronization: Stochastic
  resonance in a double-well potential.
\newblock {\em The Annals of Applied Probability}, 12(4):1419--1470, 2002.

\bibitem{berglund2003geometric}
Nils Berglund and Barbara Gentz.
\newblock Geometric singular perturbation theory for stochastic differential
  equations.
\newblock {\em Journal of differential equations}, 191(1):1--54, 2003.

\bibitem{cameron2012finding}
MK~Cameron.
\newblock Finding the quasipotential for nongradient sdes.
\newblock {\em Physica D: Nonlinear Phenomena}, 241(18):1532--1550, 2012.

\bibitem{chaichian2001path}
Masud Chaichian and A~Demichev.
\newblock {\em Path Integrals in Physics: Stochastic Processes and Quantum
  Mechanics}.
\newblock Institute of Physics, 2001.

\bibitem{day1990large}
Martin~V Day.
\newblock Large deviations results for the exit problem with characteristic
  boundary.
\newblock {\em Journal of mathematical analysis and applications},
  147(1):134--153, 1990.

\bibitem{doedel1998auto}
Eusebius~J Doedel and B~Oldeman.
\newblock Auto-07p: continuation and bifurcation software.

\bibitem{dykman2001activated}
MI~Dykman, Brage Golding, LI~McCann, VN~Smelyanskiy, DG~Luchinsky, R~Mannella,
  and PVE McClintock.
\newblock Activated escape of periodically driven systems.
\newblock {\em Chaos: An Interdisciplinary Journal of Nonlinear Science},
  11(3):587--594, 2001.

\bibitem{eisenman2009nonlinear}
I~Eisenman and JS~Wettlaufer.
\newblock Nonlinear threshold behavior during the loss of arctic sea ice.
\newblock {\em Proceedings of the National Academy of Sciences}, 106(1):28--32,
  2009.

\bibitem{eisenman2012factors}
Ian Eisenman.
\newblock Factors controlling the bifurcation structure of sea ice retreat.
\newblock {\em Journal of Geophysical Research: Atmospheres}, 117(D1), 2012.

\bibitem{evans1990weak}
Lawrence~C Evans.
\newblock {\em Weak convergence methods for nonlinear partial differential
  equations}.
\newblock Number~74. American Mathematical Soc., 1990.

\bibitem{freedman1981histogram}
David Freedman and Persi Diaconis.
\newblock On the histogram as a density estimator: L 2 theory.
\newblock {\em Zeitschrift f{\"u}r Wahrscheinlichkeitstheorie und verwandte
  Gebiete}, 57(4):453--476, 1981.

\bibitem{freidlin2000quasi}
Mark~I Freidlin.
\newblock Quasi-deterministic approximation, metastability and stochastic
  resonance.
\newblock {\em Physica D: Nonlinear Phenomena}, 137(3):333--352, 2000.

\bibitem{freidlin2012random}
Mark~I Freidlin and Alexander~D Wentzell.
\newblock {\em Random perturbations of dynamical systems}, volume 260.
\newblock Springer Science \& Business Media, 2012.

\bibitem{gammaitoni1998stochastic}
Luca Gammaitoni, Peter H{\"a}nggi, Peter Jung, and Fabio Marchesoni.
\newblock Stochastic resonance.
\newblock {\em Reviews of modern physics}, 70(1):223, 1998.

\bibitem{herrmann2005exit}
Samuel Herrmann and Peter Imkeller.
\newblock The exit problem for diffusions with time-periodic drift and
  stochastic resonance.
\newblock {\em The Annals of Applied Probability}, 15(1A):39--68, 2005.

\bibitem{heymann2008geometric}
Matthias Heymann and Eric Vanden-Eijnden.
\newblock The geometric minimum action method: A least action principle on the
  space of curves.
\newblock {\em Communications on pure and applied mathematics},
  61(8):1052--1117, 2008.

\bibitem{higham2001algorithmic}
Desmond~J Higham.
\newblock An algorithmic introduction to numerical simulation of stochastic
  differential equations.
\newblock {\em SIAM review}, 43(3):525--546, 2001.

\bibitem{hill2016analysis}
Kaitlin Hill, Dorian~S Abbot, and Mary Silber.
\newblock Analysis of an arctic sea ice loss model in the limit of a
  discontinuous albedo.
\newblock {\em SIAM Journal on Applied Dynamical Systems}, 15(2):1163--1192,
  2016.

\bibitem{jost1998calculus}
J{\"u}rgen Jost and Xianqing Li-Jost.
\newblock {\em Calculus of variations}, volume~64.
\newblock Cambridge University Press, 1998.

\bibitem{kuehn2011mathematical}
Christian Kuehn.
\newblock A mathematical framework for critical transitions: Bifurcations,
  fast--slow systems and stochastic dynamics.
\newblock {\em Physica D: Nonlinear Phenomena}, 240(12):1020--1035, 2011.

\bibitem{kuehn2013mathematical}
Christian Kuehn.
\newblock A mathematical framework for critical transitions: Normal forms,
  variance and applications.
\newblock {\em Journal of Nonlinear Science}, 23(3):457--510, 2013.

\bibitem{moon2017stochastic}
Woosok Moon and JS~Wettlaufer.
\newblock A stochastic dynamical model of arctic sea ice.
\newblock {\em Journal of Climate}, (2017), 2017.

\bibitem{navarra2013path}
Antonio Navarra, Joe Tribbia, and Giovanni Conti.
\newblock The path integral formulation of climate dynamics.
\newblock {\em PloS one}, 8(6):e67022, 2013.

\bibitem{onsager1953fluctuations}
Lars Onsager and S~Machlup.
\newblock Fluctuations and irreversible processes.
\newblock {\em Physical Review}, 91(6):1505, 1953.

\bibitem{ren2004minimum}
Weiqing Ren and Eric Vanden-Eijnden.
\newblock Minimum action method for the study of rare events.
\newblock {\em Communications on pure and applied mathematics}, 57(5):637--656,
  2004.

\bibitem{ritchie2016early}
Paul Ritchie and Jan Sieber.
\newblock Early-warning indicators for rate-induced tipping.
\newblock {\em Chaos: An Interdisciplinary Journal of Nonlinear Science},
  26(9):093116, 2016.

\bibitem{robinson2001infinite}
James~C Robinson.
\newblock {\em Infinite-dimensional dynamical systems: an introduction to
  dissipative parabolic PDEs and the theory of global attractors}, volume~28.
\newblock Cambridge University Press, 2001.

\bibitem{skeel1990method}
Robert~D Skeel and Martin Berzins.
\newblock A method for the spatial discretization of parabolic equations in one
  space variable.
\newblock {\em SIAM journal on scientific and statistical computing},
  11(1):1--32, 1990.

\bibitem{weinan2002string}
E~Weinan, Weiqing Ren, and Eric Vanden-Eijnden.
\newblock String method for the study of rare events.
\newblock {\em Physical Review B}, 66(5):052301, 2002.

\bibitem{wiegel1986introduction}
Frederik~W Wiegel.
\newblock {\em Introduction to path-integral methods in physics and polymer
  science}.
\newblock World Scientific, 1986.

\end{thebibliography}


\appendix

\section{Convergence of OM functional to FW functional in the limit $\sigma\rightarrow 0$} \label{sec:TheoremApend}
\begin{theorem} \label{Thm:ConvergenceFM} For $\sigma \in [0,1]$, if $x_{\sigma}\in \mathcal{A}$ is a sequence satisfying $\min_{x \in \mathcal{A}} I_{\sigma}[x] =I_{\sigma}[x_{\sigma}]$ then there exists a subsequence $x_{\sigma_k}\in \mathcal{A}$ with $\sigma_k\rightarrow 0$ as $k\rightarrow \infty$ such that:
\begin{equation*}
\lim_{k \rightarrow \infty}I_{\sigma_k}[x_{\sigma_k}]=\min_{x\in \mathcal{A}}I_\text{FW}[x].
\end{equation*}
Moreover, there exists $x_0\in \mathcal{A}$ such that $x_{\sigma_k}\stackrel{H^1}{\rightarrow} x_0$ and $\min_{x\in \mathcal{A}} I_\text{FW}[x]=I_\text{FW}[x_0]$.
\end{theorem}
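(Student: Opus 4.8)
The plan is to carry out a direct-method argument in the spirit of $\Gamma$-convergence, with the uniform coercivity of Theorem~\ref{Thm:Coercivity} supplying the needed compactness. The argument splits into a compactness step, a matching pair of energy inequalities, and a final upgrade from weak to strong convergence.

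First I would produce a uniform $H^1$ bound on the minimizers. Fixing any competitor $\bar x\in\mathcal{A}$ — for instance a minimizer of $I_\text{FW}$, which exists by Corollary~\ref{Cor:Existence} at $\sigma=0$ — minimality gives $I_\sigma[x_\sigma]\leq I_\sigma[\bar x]=I_\text{FW}[\bar x]+\tfrac{\sigma^2}{\varepsilon}I_\text{OM2}[\bar x]$, whose right-hand side is bounded by a constant $C$ independent of $\sigma$ since $\sigma\in[0,1]$. Theorem~\ref{Thm:Coercivity} then forces $\|\dot x_\sigma\|_{L^2}^2\leq C-M$, and, together with the fixed left endpoint value $x_\sigma(t_0)$, this bounds $\|x_\sigma\|_{H^1}$ uniformly in $\sigma$. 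Because $H^1([t_0,t_f];\mathbb{R})$ is a reflexive Hilbert space compactly embedded in $C^0$, I then extract a subsequence $\sigma_k\to 0$ with $x_{\sigma_k}\rightharpoonup x_0$ weakly in $H^1$ and $x_{\sigma_k}\rightarrow x_0$ uniformly; the boundary conditions survive the limit, so $x_0\in\mathcal{A}$.

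Next I establish two matching inequalities with $m_\text{FW}:=\min_{x\in\mathcal{A}}I_\text{FW}[x]$. Testing minimality of $x_{\sigma_k}$ against an $I_\text{FW}$-minimizer $x^*_\text{FW}$ gives $I_{\sigma_k}[x_{\sigma_k}]\leq m_\text{FW}+\tfrac{\sigma_k^2}{\varepsilon}I_\text{OM2}[x^*_\text{FW}]$, hence $\limsup_k I_{\sigma_k}[x_{\sigma_k}]\leq m_\text{FW}$. For the reverse, the weak lower semicontinuity of $I_\text{FW}$ (Tonelli's theorem, as the integrand is continuous and convex in $\dot x$), together with $\tfrac{\sigma_k^2}{\varepsilon}I_\text{OM2}[x_{\sigma_k}]\to 0$ — which holds because $x_{\sigma_k}$ is uniformly bounded and $f_x=1-3x^2$, so $I_\text{OM2}$ stays uniformly bounded — yields $\liminf_k I_{\sigma_k}[x_{\sigma_k}]\geq I_\text{FW}[x_0]\geq m_\text{FW}$. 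Squeezing the two bounds proves $\lim_k I_{\sigma_k}[x_{\sigma_k}]=m_\text{FW}=I_\text{FW}[x_0]$, which simultaneously delivers the stated limit of the energies and shows that $x_0$ minimizes $I_\text{FW}$.

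Finally I would upgrade weak to strong $H^1$ convergence. The previous step forces $I_\text{FW}[x_{\sigma_k}]\to I_\text{FW}[x_0]$; expanding $I_\text{FW}[x]=\|\dot x\|_{L^2}^2-\tfrac{2}{\varepsilon}\langle \dot x,f(x,\cdot)\rangle_{L^2}+\tfrac{1}{\varepsilon^2}\|f(x,\cdot)\|_{L^2}^2$, the cross term passes to the limit by a weak–strong pairing ($f(x_{\sigma_k},\cdot)\to f(x_0,\cdot)$ strongly in $L^2$ from uniform convergence, paired against $\dot x_{\sigma_k}\rightharpoonup \dot x_0$), and likewise the $\|f(x,\cdot)\|_{L^2}^2$ term converges. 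Hence $\|\dot x_{\sigma_k}\|_{L^2}^2\to\|\dot x_0\|_{L^2}^2$, and weak convergence plus convergence of norms in the Hilbert space $L^2$ gives $\dot x_{\sigma_k}\rightarrow \dot x_0$ strongly, i.e.\ $x_{\sigma_k}\stackrel{H^1}{\rightarrow}x_0$. The main obstacle is exactly this last step: promoting weakly to strongly convergent derivatives is not automatic from the $\liminf$ inequality and relies on the weak–strong pairing to recover convergence of $\|\dot x\|_{L^2}$ itself. A secondary care point is verifying that all constants — in the coercivity estimate and in the bound on $I_\text{OM2}$ — are genuinely uniform over $\sigma\in[0,1]$, which is where Theorem~\ref{Thm:Coercivity} and the explicit form of $f_x$ are essential.
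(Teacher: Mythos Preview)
Your proposal is correct and follows essentially the same route as the paper's own proof: uniform coercivity from Theorem~\ref{Thm:Coercivity} gives an $H^1$ bound on the minimizers, weak compactness plus the vanishing of $\tfrac{\sigma^2}{\varepsilon}I_\text{OM2}$ and weak lower semicontinuity of $I_\text{FW}$ yield the squeeze $\lim_k I_{\sigma_k}[x_{\sigma_k}]=I_\text{FW}[x_0]=m_\text{FW}$, and the upgrade to strong $H^1$ convergence is obtained exactly as in the paper by expanding $I_\text{FW}$, using the weak--strong pairing on the cross term, and concluding $\|\dot x_{\sigma_k}\|_{L^2}\to\|\dot x_0\|_{L^2}$. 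Your presentation is slightly cleaner in two places --- you invoke the compact embedding $H^1\hookrightarrow C^0$ directly for uniform convergence, and you make explicit that boundary values survive the limit so $x_0\in\mathcal{A}$ --- but the argument is the same.
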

\begin{proof}
By Corollary \ref{Cor:Existence} there exists $x_{\sigma}\in \mathcal{A}$ that minimizes $I_{\sigma}$ and thus for all $x \in \mathcal{A}$
\begin{equation}
\limsup_{\sigma \rightarrow 0} I_{\sigma}[x_{\sigma}]\leq \limsup_{\sigma \rightarrow 0}I_{\sigma} [x]=I_\text{FW}[x]. \label{Eq:Proof1}
\end{equation}
Therefore, Theorem \ref{Thm:Coercivity} implies that $\|\dot{x}_{\sigma}\|_{L^2}$ is bounded and by the Poincar\'e inequality $x_{\sigma}$ is bounded with respect to the $H^1$ norm. Consequently, by the Banach-Alaoglu theorem there exists $x_0\in \mathcal{A}$ and a subsequence $x_{\sigma_k}$ with $\sigma_k\rightarrow 0$ as $k\rightarrow \infty$ such that $x_{\sigma_k}\stackrel{H^1}{\rightharpoonup} x_0$. Moreover,
\begin{align*}
&\frac{\sigma_k^2}{\varepsilon}\left[(t_f-t_0)-3\sup_{k}\|x_{\sigma_k}\|_{L^2}^2\right]\\\leq &\frac{\sigma_k^2}{\varepsilon}I_{\text{OM2}}[x_{\sigma_k}]\\=&\frac{\sigma_k^2}{\varepsilon}\int_{t_0}^{t_f}\left(1-3x_{\sigma_k}(t)^2\right)dt\leq \frac{\sigma_k^2}{\varepsilon}(t_f-t_0)
\end{align*}
and therefore 
\begin{equation}
\lim_{k\rightarrow \infty}\frac{\sigma_k^2}{\varepsilon}I_{\text{OM2}}[x_{\sigma_k}]=0. \label{Eq:Proof2}
\end{equation}

By Corollary \ref{Cor:Existence} there exists $\bar{x}_0\in \mathcal{A}$ that minimizes $I_{\text{FW}}$. Since $I_{\text{FW}}$ is convex in $\dot{x}$ it is weakly lower semi-continuous with respect to the $H^1$ inner product \cite{jost1998calculus}. Therefore, by lower semi-continuity, (\ref{Eq:Proof1}), and (\ref{Eq:Proof2}) it follows that
\begin{align*}
I_\text{FW}[x_0]&\leq \liminf_{k \rightarrow \infty} I_{\text{FW}}[x_{\sigma_k}] = \liminf_{k \rightarrow \infty}I_{\sigma_k}[x_{\sigma_k}]\\
&\leq  \limsup_{k\rightarrow \infty} I_{\sigma_k}[x_{\sigma_k}] \leq I_\text{FW}[\bar{x}_0].
\end{align*}
Since $\bar{x}_0$ minimizes $I_\text{FW}$ all of the above inequalities are in fact equalities and therefore $x_0$ is a minimizer of $I_\text{FW}$. Furthermore,  it follows that $\liminf_{k \rightarrow \infty}I_{\sigma_k}[x_{\sigma_k}]=\limsup_{k\rightarrow \infty} I_{\sigma_k}[x_{\sigma_k}]$, 
which implies $\lim_{k\rightarrow \infty}I_{\sigma_k}[x_{\sigma_k}]$ exists and satisfies $\lim_{k\rightarrow \infty}I_{\sigma_k}[x_{\sigma_k}]=I_{\text{FW}}[x_0]=I_{\text{FW}}[\bar{x}_0]$.

Finally, since $x_{\sigma_k}\in H^1([t_0,t_f];\mathbb{R})$ it follows that $x_{\sigma_k}$ is absolutely continuous and hence $|x_{\sigma_k}(t)-x_0(t)|\leq \int_{t_0}^{t_f}\left|\dot{x}_{\sigma_k}(t)-\dot{x}_0(t)\right|dt$.
Therefore, it follows from weak convergence of $\dot{x}_{\sigma_k}$ that $x_{\sigma_k}$ converges uniformly to $x_{0}$. Expanding it follows that
\begin{align*}
&\left|\|\dot{x}_{\sigma_k}\|_{L^2}^2-\|\dot{x}_{\sigma}\|_{L^2}^2\right|\\
= &\left|I_{\sigma_k}[x_{\sigma_k}]-I_{\text{FW}}[x_{\sigma_k}]-2\int_{t_0}^{t_f}\dot{x}_{\sigma_k}(t)f(x_{\sigma_k}(t),t)dt\right.\\
&\left.\,\, +\int_{t_0}^{t_f}f(x_{\sigma_k}(t),t)^2dt+2\int_{t_0}^{t_f}\dot{x}_0(t)f(x_0(t),t)dt\right.\\
&\left.\,\,-\int_{t_0}^{t_f}f(x_0(t),t)^2dt-\frac{\sigma^2}{\varepsilon}\int_{t_0}^{t_f}f_x(x_{\sigma_k}(t),t)dt\right|.
\end{align*}
Therefore, from uniform convergence of $x_{\sigma_k}$, weak convergence of $\dot{x}_{\sigma_k}$ and (\ref{Eq:Proof2}) it follows that $\lim_{k\rightarrow \infty} \|\dot{x}_{\sigma_k}\|_{L^2}=\|\dot{x}_{0}\|_{L^2}$ and thus we can conclude that $\dot{x}_{\sigma_k}\rightarrow \dot{x}_{0}$ strongly in $L^2$ \cite{evans1990weak}. Hence,  by the Poincar\'e inequality $x_{\sigma_k}$ converges strongly to $x_{0}$ in $H^1$ .
\end{proof}

\section{Convergence test of Monte Carlo simulations and more comparison results}\label{sec:MCappendix}
\begin{figure}[ht]
\centering
\includegraphics[width=0.46\textwidth]{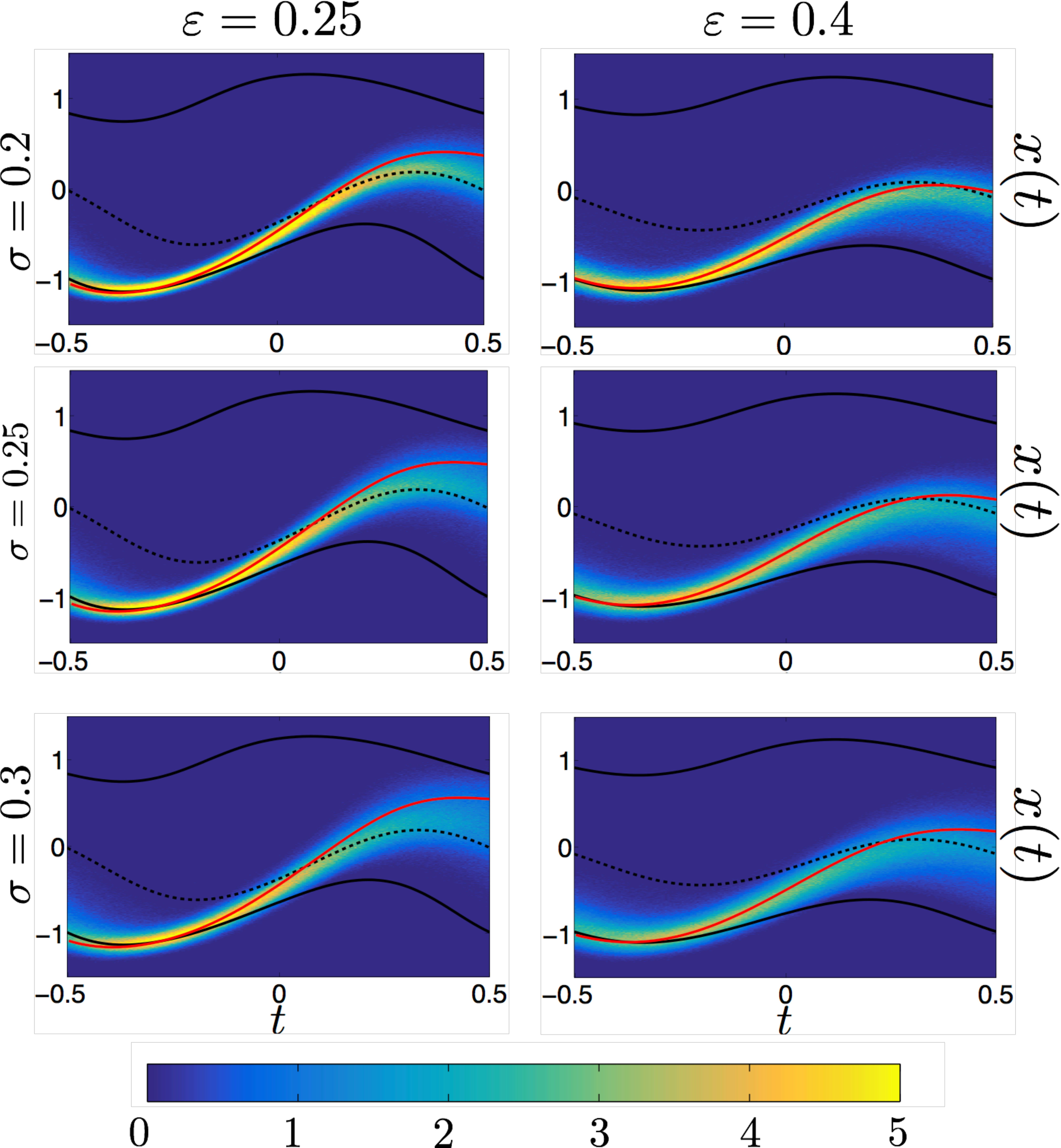}
\caption{Comparison between the optimal path results, represented by the red curves, and the Monte Carlo results. Parameters are taken as follows: $\varepsilon = 0.25$ on the left column and $\varepsilon = 0.4$ on the right one; $\sigma = 0.2, 0.25$ and $0.3$ from top to the bottom rows. $\alpha$ and $A$ are fixed at $0.15$ and $0.7$, respectively. These are illustrations of the result summarized in Table \ref{table: MCcomparison}. Our criterion for convergence test of the Monte Carlo simulation is $E<10^{-5}$, where $E$ is defined in \eqref{eqn: meansquarederror}. From top to bottom, $E = 1.2\times 10^{-6}, 1.2\times 10^{-6}, 7.5\times 10^{-6}$ on the left column and $E = 8.7\times 10^{-6}, 1.4\times 10^{-6}, 1.7\times 10^{-6}$ on the right column. }\label{fig: more_MC_comparison}
\end{figure}
Here we describe the procedure of Monte Carlo simulations. We use the Euler-Maruyama method \cite{higham2001algorithmic} on the interval $[t_0,t_f]$ with initial data $x_0=x_l^*(t_0)$ to generate a sequence of samples ${X_j(t)}$ for (\ref{eqn:ModelSDE}), each of which runs for $K$ time steps. We collect N samples that meet the following criteria for transitioning from $x_l^*$ to $x_u^*$. Let  $\tau_j=\inf_{t}\{X_j(t)> x_u^*(t)\}$ denote the first time a sample path crosses $x_u^*(t)$. We define \emph{the tipping events} to be $\{X_i\}$ that satisfy $\{X_i\}=\{X_j: \tau_j<t_f\}$. We then construct the distribution of the tipping events using $M$ bins. The convergence test of the Monte Carlo simulations used to obtain distributions of the tipping events shown in Section \ref{sec:noiseDriftBalance} involves the following steps:
\begin{enumerate}
\item At each time step, $t^{(k)}$ ($k = 1...K$), we divide the samples $\{X_i(t^{(k)}): i = 1... N\}$ into $M$ equal width bins in the $x$ direction. We then record  the fraction $h_N^{(m,k)}$ of the samples that fall into the $m^\text{th}$ bin. We pick $M$ in the histogram based on Freedman Diaconis rule \cite{freedman1981histogram}, and $K$ so that Euler-Maruyama method converges.
\item We double the number of samples to $2N$ and repeat Step 1 to get $h_{2N}^{(m,k)}$.
\item We then compute the mean squared error between $h_{2N}^{(m,k)}$ and $h_{N}^{(m,k)}$ ($m = 1...M, k = 1...K$), given by $E = \sum_{k = 1}^{K}\sum_{m = 1}^M \left(h_{N}^{(m,k)} - h_{2N}^{(m,k)}\right)^2/(MK)$. 
\item If $E<10^{-5}$, we say that the Monte Carlo simulation has converged. If it has not converged, we double the number of sample paths and repeat the procedure.
\end{enumerate}

\section{Independence of initial and final times $t_0$ and $t_f$}\label{sec:independenc}
\begin{figure}[ht]
\begin{center}
\includegraphics[width=0.47\textwidth]{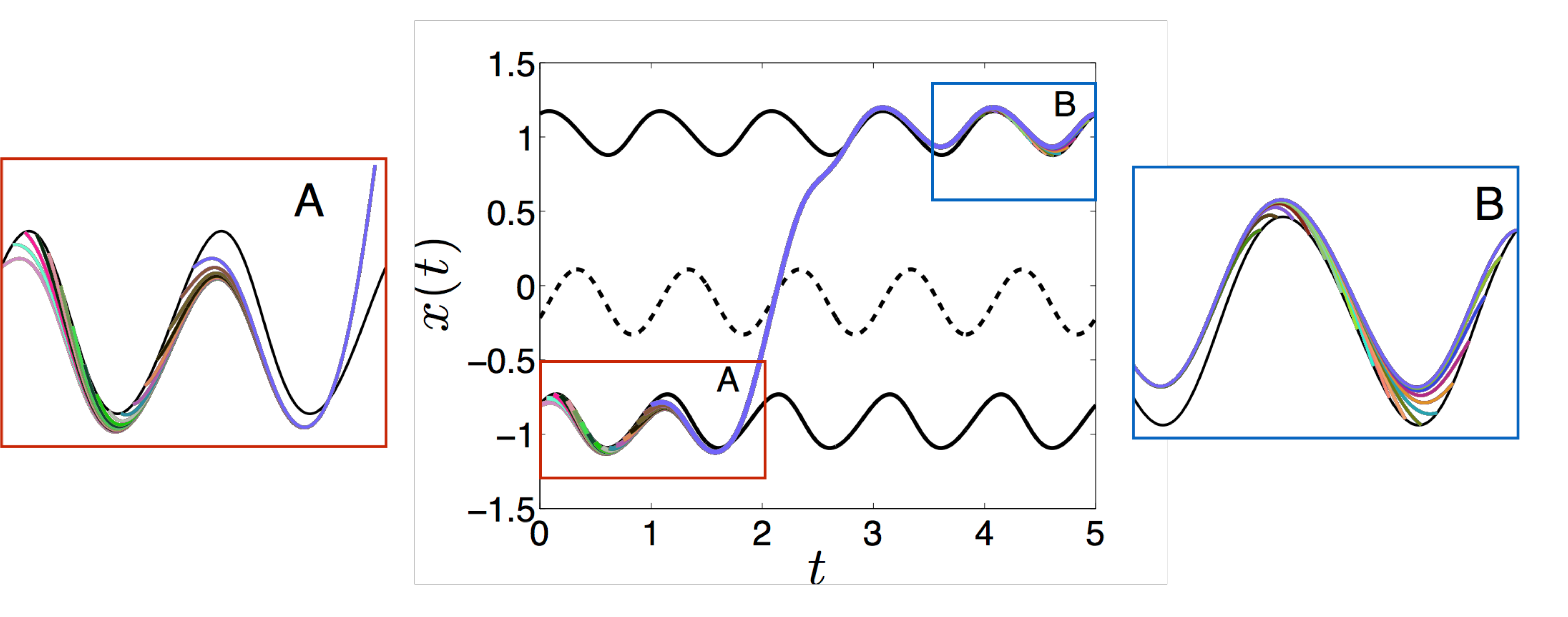}
\end{center}
\caption{Illustration of 
numerical experiment to test 
that the optimal path is independent of $t_0$ and $t_f$.
We vary $t_0$ and $t_f$ in the first and third periods. 
$(\varepsilon,\alpha,A,\sigma)= (0.25, 0.1, 0.4,0.5)$.}\label{fig:independence}
\end{figure}
To show that the optimal path is not sensitive to the choice of  $t_0$ and $t_f$, we vary $t_0$ ($t_f$) by taking $16$ uniformly spaced values in $[0, 1]$ ($[4, 5]$).
Initial guesses are in the form of \eqref{eq:initialGuess} with the same intermediate time $t_j = 2$. In blow up box A (B), curves with different colors indicate that the paths start (end) at different positions on $x_l^*$ ($x_u^*$).  
In the second period all paths converge to the same 
curve from $x_l^*$ to $x_u^*$. After arriving at $x_u^*$, 
the paths start to separate shortly before the fourth period since we impose different final positions.
Thus, given a large enough interval of time between $t_l$ and $t_u$, optimal paths which start and end at different positions transition from $x_l^*$ to $x_m^*$ at the same forcing phase.

\end{document}